\newtheorem{theorem}{Theorem}
\newtheorem{remark}[theorem]{Remark}
\newtheorem{lemma}[theorem]{Lemma}
\newtheorem{conjecture}[theorem]{Conjecture}
\newtheorem{example}[theorem]{\it Example}
\begin{document}

\title{On private information retrieval array codes}

\author{Yiwei Zhang, Xin Wang, Hengjia Wei and Gennian Ge
\thanks{The research of G. Ge was supported by the National Natural Science Foundation of China under Grant Nos. 11431003 and 61571310.}
\thanks{Y. Zhang is with the School of Mathematical Sciences, Capital Normal University, Beijing 100048, China (email: rexzyw@163.com).}
\thanks{X. Wang is with the School of Mathematical Sciences, Zhejiang University, Hangzhou 310027, China (email: 11235062@zju.edu.cn).}
\thanks{H. Wei is with the School of Physical and Mathematical Sciences, Nanyang Technological University, Singapore (email: ven0505@163.com).}
\thanks{G. Ge is with the School of Mathematical Sciences, Capital Normal University,
Beijing 100048, China. He is also with Beijing Center for Mathematics and Information Interdisciplinary Sciences, Beijing 100048, China (e-mail: gnge@zju.edu.cn).}
}

\date{}\maketitle

\begin{abstract}
Given a database, the private information retrieval (PIR) protocol allows a user to make queries to several servers and retrieve a certain item of the database via the feedbacks, without revealing the privacy of the specific item to any single server. Classical models of PIR protocols require that each server stores a whole copy of the database. Recently new PIR models are proposed with coding techniques arising from distributed storage system. In these new models each server only stores a fraction $1/s$ of the whole database, where $s>1$ is a given rational number. PIR array codes are recently proposed by Fazeli, Vardy and Yaakobi to characterize the new models. Consider a PIR array code with $m$ servers and the $k$-PIR property (which indicates that these $m$ servers may emulate any efficient $k$-PIR protocol). The central problem is to design PIR array codes with optimal rate $k/m$. Our contribution to this problem is three-fold. First, for the case $1<s\le 2$, although PIR array codes with optimal rate have been constructed recently by Blackburn and Etzion, the number of servers in their construction is impractically large. We determine the minimum number of servers admitting the existence of a PIR array code with optimal rate for a certain range of parameters. Second, for the case $s>2$, we derive a new upper bound on the rate of a PIR array code. Finally, for the case $s>2$, we analyze a new construction by Blackburn and Etzion and show that its rate is better than all the other existing constructions.
\end{abstract}

\begin{keywords}
Private information retrieval, PIR array codes
\end{keywords}

\section{Introduction}

The private information retrieval (PIR) protocol is first introduced in \cite{chor}. The classical model is as follows. Suppose we have an $n$-bit database and a set of $k$ servers, each storing a whole copy of the database, so the total {\it storage overhead} is $nk$. A $k$-server PIR protocol will allow a user to retrieve a data item while each server (as long as they do not collude) has no information about which item is retrieved. For example, suppose the database is $\mathbf{x}=(x_1,x_2,\dots,x_n)$ and a user wants to retrieve $x_i$. In a $2$-server PIR protocol, the user may randomly pick a vector $\mathbf{v}\in \{0,1\}^n$. The first server receives the query $\mathbf{v}$ and responds to the user with $\mathbf{v}\cdot\mathbf{x}$. The second server receives the query $\mathbf{v}+\mathbf{e_i}$ and responds with $(\mathbf{v+e_i})\cdot\mathbf{x}$. Then the user may retrieve $x_i=(\mathbf{v+e_i})\cdot\mathbf{x}-\mathbf{v}\cdot\mathbf{x}$. Each server itself does not know which item is retrieved since $\mathbf{v}$ is a random vector.

Recently, PIR protocols have been combined with techniques and ideas arising from distributed storage system \cite{augot,chan,fanti,shah,tajeddine}. Instead of storing a complete copy of the database in each server, in the newly proposed models each server only stores a fraction of the database. A breakthrough by Fazeli, Vardy and Yaakobi \cite{fazeli,fazeli2} shows that $m$ servers (for some $m>k$) may emulate a $k$-server protocol with storage overhead significantly smaller than $nk$. Continuing the example above, let three servers store the following fractions of database respectively: $\mathbf{x'}=(x_1,\dots,x_{n/2})$, $\mathbf{x''}=(x_{n/2+1},\dots,x_{n})$ and $\mathbf{x'}+\mathbf{x''}$. Without loss of generality assume that a user wants to retrieve $x_i$ with $1\le i \le n/2$. The user may randomly pick a vector $\mathbf{u}\in \{0,1\}^{n/2}$ and the queries for the three servers are correspondingly $\mathbf{u}$, $\mathbf{u}+\mathbf{e_i}$ and $\mathbf{u}+\mathbf{e_i}$. Then by calculating $x_i=-\mathbf{u}\cdot\mathbf{x'}-(\mathbf{u}+\mathbf{e_i})\cdot\mathbf{x''}+(\mathbf{u}+\mathbf{e_i})\cdot(\mathbf{x'}+\mathbf{x''})$, the user successfully retrieves the item $x_i$ without revealing its privacy. Compared with the original model, the storage overhead reduces from $2n$ to $\frac{3n}{2}$.

In \cite{fazeli} the problem of designing PIR protocols is reformulated as designing a corresponding PIR array code, which is defined as follows. Given positive integers $t$, $m$, $p$ and $k$, a {\it $[t\times m,p]$ array code} is a $t\times m$ array, where each entry is a linear combination of $\{x_1,\dots,x_p\}$ (we may view each $x_i$ as an element in a certain finite field $\mathbb{F}$). The array code has the {\it $k$-PIR property} if for every $i\in \{1,2,\dots,p\}$ there exist $k$ pairwise disjoint subsets $S_1,S_2,\dots,S_k$ of columns such that the entries in each $S_j$ could linearly span $x_i$, $1\le j \le k$. We further call such an array code a $[t\times m,p]$ $k$-PIR array code. For example, the following is a $[3\times6,6]$ $4$-PIR array code:

\begin{table}[!hb]
\centering
\begin{tabular}{|c|c|c|c|c|c|}
  \hline
  $x_1$ & $x_2$ & $x_3$ & $x_4$ & $x_5$ & $x_6$ \\\hline
  $x_2$ & $x_3$ & $x_4$ & $x_5$ & $x_6$ & $x_1$ \\\hline
  $x_3+x_4$ & $x_4+x_5$ & $x_5+x_6$ & $x_6+x_1$ & $x_1+x_2$ & $x_2+x_3$ \\
  \hline
\end{tabular}
\end{table}

We may verify the $4$-PIR property directly. For example, $x_1$ may be spanned by $S_1=\{1\}$, $S_2=\{6\}$, $S_3=\{2,5\}$ (by $(x_1+x_2)-x_2$) and $S_4=\{3,4\}$ (by $(x_1+x_6)+x_5-(x_5+x_6)$).

The relation of a $[t\times m,p]$ $k$-PIR array code with a PIR protocol is as follows. The $n$-bit database is partitioned into $p$ parts $\{x_1,x_2,\dots,x_p\}$, each part encoded as an element of a certain finite field $\mathbb{F}$. A column of the array corresponds to a server. Each server has $t$ {\it cells} storing the linear combinations of $\{x_1,x_2,\dots,x_p\}$ suggested by the entries. In \cite{fazeli} it is shown that the $k$-PIR property allows the servers to emulate all known efficient $k$-server PIR protocols. The storage overhead in this scheme is then $ntm/p$, better than $nk$ if the array code is good enough (namely $tm/p<k$). Let $s$ be the ratio between the size of the whole database and that of the data stored on each server, i.e., $s=\frac{n}{nt/p}=p/t$. The goal is to minimize the storage overhead, so we would like that $\frac{nk}{ntm/p}=s\frac{k}{m}$ is as large as possible. The PIR rate of such an array code is then defined as $k/m$.

Note that given a PIR array code, each server could actually span a subspace $V$ of $\mathbb{F}^p$ of dimension at most $t$ using the information in its $t$ cells. Changing the cells to produce a new spanning set for $V$, or even to replace $V$ by a larger subspace containing $V$, will not harm the $k$-PIR property. So without loss of generality we shall follow two assumptions posed in \cite{blackburn}:

$\bullet$ if $x_i$ can be derived by a single server alone, then $x_i$ is stored as the value of one of the cells of the server;

$\bullet$ the data stored in any server's cells are linearly independent, i.e., the subspace spanned by the $t$ cells has dimension $t$.

A further reasonable assumption is to make the PIR array code as simple as possible. We assume that if $x_i$ can be derived by a single server alone, then except for the singleton cell $x_i$, the item $x_i$ does not appear in any other cell of the server.

Now the general problem is as follows. Given $s$ and $t$ (so $p=st$ is also given), we want to build a $[t\times m,p]$ $k$-PIR array code with the largest rate $k/m$, denoted as $g(s,t)$. Further we want to analyze $g(s)=\overline{\lim}_{t\rightarrow\infty}g(s,t)$. Below we will list several results regarding this problem, the first two of which can be derived from \cite{fazeli}.

\begin{theorem}
For any given positive integer $s$, $g(s,1)\le \frac{2^{s-1}}{2^s-1}$, with equality if and only if $k$ is divisible by $2^{s-1}$.
\end{theorem}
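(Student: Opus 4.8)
I would prove the inequality by induction on $s$, after normalising the code. The ``with equality'' part will then drop out by tracking where the chain of inequalities is tight, together with the explicit construction. I briefly tried a weighting/LP‑duality argument (assign to each nonzero vector a dual weight and sum the per‑coordinate inequalities), but making the total weight constant across all nonzero vectors requires a fiddly weighting; the induction on $s$ is cleaner, so that is what I would write up.

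\textbf{Normalisation and base case.} For $t=1$ the stated assumptions just mean that each column (server) holds a single vector $v_c\in\mathbb F^{s}$, and a zero column may be deleted since that only decreases $m$; so assume every $v_c\neq 0$. Also, whenever we invoke the $k$‑PIR property for a fixed $i$ we may take the $k$ pairwise disjoint column sets $S_1,\dots,S_k$ spanning $e_i$ to be \emph{minimal}: then the vectors $\{v_c:c\in S_j\}$ are linearly independent and some scalar combination with all coefficients nonzero equals $e_i$. For $s=1$ every nonempty set of columns spans $x_1$, so $m\ge k=\frac{2^{1}-1}{2^{0}}k$, which is the base case.

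\textbf{Inductive step.} Fix the coordinate $s$ and split the columns (with multiplicity) into the $P$ columns with $(v_c)_s\neq 0$ and the $Q$ columns lying in the hyperplane $H=\{v:v_s=0\}\cong\mathbb F^{s-1}$, so $m=P+Q$. A minimal set spanning $e_s$ must use a column outside $H$ (its span lies in $H\not\ni e_s$ otherwise); as the $k$ sets for index $s$ are disjoint, $P\ge k$. For $i<s$, look at coordinate $s$ of the relation $\sum_{c\in S_j}\lambda_c v_c=e_i$: the $s$‑th coordinate of the right side is $0$, and $\lambda_c\neq0$ for all $c$ by minimality, so $S_j$ cannot contain exactly one column outside $H$. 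Hence each of the $k$ minimal sets for index $i$ either lies in $H$ or uses at least two columns outside $H$; if $g_i$ of them lie in $H$, then $2(k-g_i)\le P$, i.e.\ $g_i\ge k-\lfloor P/2\rfloor$. Consequently the columns in $H$, read as linear forms in $x_1,\dots,x_{s-1}$, form a $[1\times Q,\,s-1]$ array code with the $k'$‑PIR property for $k'=k-\lfloor P/2\rfloor$. If $k'\le0$ then $m\ge P\ge 2k>\frac{2^{s}-1}{2^{s-1}}k$ and we are done; otherwise the bound for $s-1$ gives $Q\ge\frac{2^{s-1}-1}{2^{s-2}}k'\ge\frac{2^{s-1}-1}{2^{s-2}}\bigl(k-\tfrac{P}{2}\bigr)$, whence
\begin{equation*}
m=P+Q\ \ge\ P\Bigl(1-\tfrac{2^{s-1}-1}{2^{s-1}}\Bigr)+\tfrac{2^{s-1}-1}{2^{s-2}}k\ =\ 2^{-(s-1)}P+\tfrac{2^{s-1}-1}{2^{s-2}}k\ \ge\ 2^{-(s-1)}k+\tfrac{2^{s-1}-1}{2^{s-2}}k\ =\ \tfrac{2^{s}-1}{2^{s-1}}k,
\end{equation*}
using $P\ge k$ and that the coefficient of $P$ is positive.

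\textbf{Equality and the main obstacle.} Tracing equality: one needs $P=k$ with $P$ even (so $2\mid k$), $Q=\frac{2^{s-1}-1}{2^{s-2}}\cdot\frac{k}{2}$, and the $H$‑code extremal for $s-1$; unwinding the recursion forces $2^{s-1}\mid k$ and pins the configuration down to the standard $[1\times(2^{s}-1),\,s]$ code in which each nonzero vector of $\mathbb F_2^{s}$ is used with multiplicity $k/2^{s-1}$ (which conversely meets the bound whenever $2^{s-1}\mid k$; note also that for $2^{s-1}\nmid k$ the quantity $\frac{2^{s}-1}{2^{s-1}}k$ is not an integer, so the inequality is automatically strict). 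The only genuinely load‑bearing step is the parity/cancellation observation that a minimal $e_i$‑spanning set cannot meet the ``outside‑$H$'' columns in exactly one point; combined with the $\lfloor P/2\rfloor$ bookkeeping it drives everything. I expect the equality characterisation to be the most delicate part to write carefully, since it requires following every inequality down to the recursion and then exhibiting the unique extremal configuration.
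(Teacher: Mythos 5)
Your argument is correct. Note that the paper itself gives no proof of this statement (it only remarks that it "can be derived from \cite{fazeli}"), so your induction on $s$ is a genuinely self-contained route rather than a restatement of anything in the text. The load-bearing step is exactly the one you identify: a \emph{minimal} recovering set for $e_i$ ($i<s$) has all coefficients nonzero, hence cannot meet the columns outside the hyperplane $H=\{v_s=0\}$ in exactly one point; combined with $P\ge k$ and the inductive bound on the $H$-code this yields $m\ge \tfrac{P}{2^{s-1}}+\tfrac{2^{s-1}-1}{2^{s-2}}k\ge\tfrac{2^s-1}{2^{s-1}}k$, and the arithmetic checks out (the $s=2$ case reproduces the known $m\ge\tfrac32 k$). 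Two small points for the write-up. First, the ``only if'' half of the equality claim needs nothing beyond integrality: $\gcd(2^s-1,2^{s-1})=1$, so $m=\tfrac{2^s-1}{2^{s-1}}k\in\mathbb{Z}$ forces $2^{s-1}\mid k$; the elaborate tracing of tightness through the recursion (and the uniqueness of the extremal configuration) is not required by the statement and can be dropped. Second, the ``if'' half does require explicitly verifying that the standard code --- all $2^s-1$ cells $\sum_{j\in A}x_j$ for nonempty $A\subseteq\{1,\dots,s\}$, each with multiplicity $k/2^{s-1}$ --- has the $k$-PIR property; you assert this parenthetically, but you should record the partition into the singleton $\{\{i\}\}$ and the $2^{s-1}-1$ pairs $\{A,A\cup\{i\}\}$ with $i\notin A$, whose two cells differ by exactly $x_i$ over any field. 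With those adjustments the proof is complete.
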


\begin{theorem}
For any integer $s\ge 3$, we have $g(s,s-1)\ge \frac{s}{2s-1}$.
\end{theorem}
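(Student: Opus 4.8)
\emph{Proof proposal.} The plan is to prove the bound by producing an explicit $[(s-1)\times m,\,s(s-1)]$ array code that has the $k$-PIR property for some $k$ with $k/m=\tfrac{s}{2s-1}$; since $g(s,s-1)$ is the supremum of all achievable rates at $t=s-1$ (recall $p=st=s(s-1)$ is then forced), this suffices. I would arrange the $p=s(s-1)$ data symbols as an $(s-1)\times s$ array $(a_{i,j})$ ($1\le i\le s-1$, $1\le j\le s$), so that each column is a block of size $s-1$ that fits in a single server. The code then consists of $s$ \emph{systematic} servers, where server $j$ stores the $j$-th column $(a_{1,j},\dots,a_{s-1,j})$, together with a family of \emph{parity} servers, each of whose $s-1$ cells is a prescribed linear combination of the symbols. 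The rate is $k/m$ with $m=s+(\#\text{parity servers})$; the heuristic target is one trivial recovery set for a symbol (the server holding its block) together with $s-1$ disjoint recovery \emph{pairs} of the remaining servers, for a budget of $1+(s-1)\cdot 2=2s-1$ servers per symbol, which is exactly what yields the rate $\tfrac{s}{2s-1}$.

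The next step is to verify the $k$-PIR property. Fix a symbol $a_{u,v}$. One recovery set is the server holding column $v$ by itself. For the remaining recovery sets I would try to match each of the other $s-1$ systematic servers with a distinct parity server so that every matched pair $\{\,\text{server }j,\ \text{parity server }P\,\}$ spans $a_{u,v}$; concretely this holds precisely when the relevant cell of $P$ equals $a_{u,v}$ plus only symbols that server $j$ already stores, i.e.\ when $P$ carries a cell of the form $a_{u,v}+(\text{column }j)$. So the parity servers must be chosen so that, \emph{simultaneously for every one of the $s(s-1)$ symbols}, the remaining $2(s-1)$ servers split into $s-1$ pairwise disjoint spanning pairs; viewing a parity server as an edge between the pair of blocks it "bridges'', this is a covering/Latin-square type condition on a design of edges (and, more generally, telescoping paths) over the $s$ blocks, which one can attempt to realize from the additive structure of $\mathbb{Z}_{s}$ or from a complete set of mutually orthogonal Latin squares of order $s-1$.

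The main obstacle — and the place where the argument has to be done carefully — is exactly this simultaneous disjointness. With only $s-1$ parity servers the bridging edges form a forest on the $s$ blocks, hence have a vertex of degree $<s-1$, so that block cannot supply $s$ disjoint recovery sets for $s\ge 3$; thus $m=2s-1$, $k=s$ is not literally attainable and one must blow up, taking $c$ copies of the systematic layer and $c(s-1)$ parity servers whose bridging multigraph is a sufficiently uniform multiple of $K_s$, while keeping $k/m=\tfrac{cs}{c(2s-1)}=\tfrac{s}{2s-1}$. The real work is then to show this multigraph decomposes into enough vertex- and edge-disjoint path systems that, together with the matching argument above, every symbol gets its $cs$ disjoint recovery sets; I expect balancing that decomposition against the budget $\tfrac{s}{2s-1}$, rather than writing down the code, to be the crux. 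As a sanity check one should recover the $s=2$ case, where the construction collapses to the three servers storing $X_1$, $X_2$, $X_1+X_2$ and the claimed rate is the optimal $2/3$.
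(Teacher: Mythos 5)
First, be aware that the paper contains no proof of this statement: Theorems~1 and~2 are quoted as results that ``can be derived from'' \cite{fazeli}, so your argument has to stand entirely on its own. As it stands it is a plan rather than a proof: you never exhibit the parity servers, and the step you yourself identify as the crux --- that after blowing up, the servers avoiding a given symbol split, \emph{simultaneously for all $s(s-1)$ symbols}, into disjoint spanning pairs --- is exactly the content of the theorem and is left unproved. Your two correct observations (the budget count $1+2(s-1)=2s-1$ behind the rate, and the forest argument showing $m=2s-1$ is unattainable) are useful sanity checks, but they are not progress toward the construction.

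The more serious issue is that the architecture you commit to provably cannot reach the rate $\frac{s}{2s-1}$, so no blow-up of it can succeed. Suppose, as you propose, that there are $c$ copies of each of the $s$ column servers, that each parity server bridges a single pair of blocks (all of its cells involve symbols of those two blocks only), and that every non-trivial recovery set is a pair consisting of one column server and one parity server. Fix a symbol $x$ in block $v$. Every recovery set for $x$ must contain a server one of whose cells has a nonzero coefficient on $x$, hence either a block-$v$ column server or a parity server incident to $v$; if $E$ is the total number of parity servers, some block $v$ is incident to at most $2E/s$ of them, so $k\le c+d$ with $d\le 2E/s$, while disjointness of the recovery sets forces $c+2d\le m=cs+E$. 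Maximizing $\frac{c+d}{cs+E}$ under these two constraints yields at most $\frac{5}{9}<\frac{3}{5}$ when $s=3$ (attained at $E=6c$) and strictly less than $\frac{2}{s}\le\frac{1}{2}<\frac{s}{2s-1}$ when $s\ge4$. Falling back on ``telescoping paths'' only makes this worse, since longer recovery sets consume more servers each. The idea your sketch is missing --- visible already in the $[3\times6,6]$ example of the introduction and in the type-$(t+1)$ servers of Section~\ref{constru} --- is that a non-singleton server should consist of $t-1$ singleton cells together with one cell summing $t+1=s$ symbols spread over many blocks (and the singleton parts should range over many $t$-subsets, not just the $s$ fixed columns), so that a single such server can act as the partner in recovery sets for symbols of many different blocks; with two-block parities, every parity server avoiding block $v$ is dead weight for every symbol of block $v$, and that waste is what caps your rate below the target.
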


Recently, this problem receives the attention from Blackburn and Etzion. They aim to construct PIR array codes with optimal PIR rate. Some of their main results in \cite{blackburn} are listed below, including: two upper bounds regarding $g(s,t)$ and $g(s)$; constructions meeting the upper bound for $1<s\le 2$; and several constructions for the case $s>2$.

\begin{theorem}
For each rational number $s>1$ we have that $g(s)\le\frac{s+1}{2s}$. There is no $t$ such that $g(s,t)=\frac{s+1}{2s}$.
\end{theorem}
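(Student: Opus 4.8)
The plan is to bound the rate by a \emph{double counting of singleton cells}, and then to sharpen the count so as to separate the two sentences of the statement.

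Fix a normalized $[t\times m,p]$ $k$-PIR array code with $p=st$. Consider an item $x_i$ together with disjoint subsets $S_1,\dots,S_k$ of columns spanning $x_i$; we may assume each $S_j$ is minimal with respect to spanning $x_i$. By the normalization assumptions, a minimal $S_j$ is either a singleton $\{c\}$ whose column $c$ stores $x_i$ as one of its cells, or has size at least $2$. If $a_i$ of the $S_j$ are singletons, then disjointness gives $a_i+2(k-a_i)\le m$, hence $a_i\ge 2k-m$. Thus $x_i$ is stored, as a singleton, in at least $a_i\ge\max(0,2k-m)$ distinct columns. Summing over the $p=st$ items: each such occurrence uses a distinct one of the $tm$ cells, so $\sum_i a_i\le tm$ while $\sum_i a_i\ge p(2k-m)$. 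When $2k>m$ this gives $st(2k-m)\le tm$, i.e.\ $2sk\le(s+1)m$, i.e.\ $k/m\le\frac{s+1}{2s}$; when $2k\le m$ the bound $k/m\le\frac12<\frac{s+1}{2s}$ is immediate. Taking $\overline{\lim}_{t\to\infty}$ yields $g(s)\le\frac{s+1}{2s}$.

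For the second sentence, I would first observe that $k/m=\frac{s+1}{2s}$ forces equality throughout the chain above, and in particular forces \emph{every} cell of the array to be a singleton $x_j$; but then a set of columns spans $x_i$ only if one of its columns already stores $x_i$, so each of the $k$ disjoint subsets for $x_i$ contains its own column storing $x_i$, whence $x_i$ is stored in at least $k$ columns, and summing gives $pk\le tm$, i.e.\ $k/m\le 1/s<\frac{s+1}{2s}$ (as $s>1$) --- a contradiction. To upgrade this into the clean statement $g(s,t)\neq\frac{s+1}{2s}$ (rather than only ``every code has rate strictly below it''), I would refine the count by also bounding below the number $M$ of non-singleton cells. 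The key linear-algebra observation is that a \emph{minimal} spanning subset $S_j$ for $x_i$ of size at least $2$ must contain a non-singleton cell involving $x_i$ (otherwise the span of all cells of $S_j$ lies in the coordinate hyperplane $x_i=0$), and for distinct $j$ these cells lie in distinct columns, hence are distinct cells. This yields $\sum_i(k-a_i)=\sum_i(\#\text{ non-singleton subsets of }x_i)\le\sum_i\#\{\text{non-singleton cells involving }x_i\}\le pM$, which together with $\sum_i a_i\le tm-M$ and $\sum_i a_i\ge p(2k-m)$ eliminates $M$ and $\sum_i a_i$ to give
\[
\frac{k}{m}\le\frac{t+p-1}{2p-1}=\frac{(s+1)t-1}{2st-1}<\frac{s+1}{2s}\qquad(s>1),
\]
so $g(s,t)<\frac{s+1}{2s}$ for every $t$, and letting $t\to\infty$ also recovers $g(s)\le\frac{s+1}{2s}$.

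The main obstacle is the linear-algebra lemma underlying the refinement: pinning down exactly what a minimal spanning subset must look like under the two normalization assumptions, and in particular keeping ``a column stores $x_i$'' cleanly separate from ``a cell of a column involves $x_i$'' --- this is where the bookkeeping is delicate and where charging each cell at most once in each of the two sums has to be checked. The remaining work (the arithmetic eliminating $M$ and $\sum_i a_i$, the degenerate case $2k\le m$, and the passage from a per-code inequality to the statements about the optima $g(s,t)$ and $g(s)$) is routine.
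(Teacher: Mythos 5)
Your proposal is correct, and it is worth noting that this theorem is only quoted in the present paper from the Blackburn--Etzion reference, so there is no in-paper proof to match; the closest in-paper analogues are the bound of Theorem \ref{upperbound} and the four-part server decomposition used to prove Theorem \ref{mainupperbound}. Your first half is the standard singleton-counting argument ($k\le a_i+\frac{m-a_i}{2}$ plus $\sum_i a_i\le tm$), which is essentially how the cited source obtains $g(s)\le\frac{s+1}{2s}$. Where you genuinely diverge is in the second sentence: you correctly observe that ``no individual code attains the rate'' does not by itself rule out $g(s,t)=\frac{s+1}{2s}$ when $g(s,t)$ is a supremum over unboundedly many servers, and you repair this with a uniform per-$t$ bound $\frac{k}{m}\le\frac{t+p-1}{2p-1}$ obtained by charging each non-singleton spanning set to a distinct non-singleton cell involving $x_i$ and eliminating $M$. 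I checked the elimination: $pk-A\le pM$, $M\le tm-A$ and $A\ge p(2k-m)$ give $(2p-1)k\le(t+p-1)m$, and $\frac{t+p-1}{2p-1}<\frac{p+t}{2p}$ exactly when $p>t$, i.e.\ $s>1$; the degenerate case $2k\le m$ is also covered since $\frac12\le\frac{t+p-1}{2p-1}$. This bound is weaker than the paper's Theorem \ref{upperbound} (which equals $1-\frac{d(d+1)}{(t+d)(2d+1)}$ versus your $1-\frac{d}{2t+2d-1}$) but is much lighter to prove and suffices for the statement. One small expository fix: your parenthetical justification of the key lemma (``otherwise the span lies in the hyperplane $x_i=0$'') only handles the case where no cell of $S_j$ involves $x_i$; in the remaining case, where the only cells involving $x_i$ are singleton cells equal to $x_i$, you must instead invoke minimality of $S_j$ together with the first normalization assumption (a column storing the singleton $x_i$ would span $x_i$ alone, contradicting $|S_j|\ge 2$). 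You flag this as the delicate point, and indeed it is, but the claim is true and the fix is one line.
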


\begin{theorem} \label{upperbound}
For any integer $t\ge2$ and any positive integer $d$, with $s=1+\frac{d}{t}$ and $p=t+d$, we have
$$g(1+\frac{d}{t},t)\le\frac{(2d+1)t+d^2}{(t+d)(2d+1)}=1-\frac{d^2+d}{(t+d)(2d+1)}.$$
Moreover, when $1<s\le2$, this upper bound is tight. That is, $g(2,t)=\frac{3t+1}{4t+2}$ and for $t\ge 2$, $1\le d \le t-1$, $g(1+\frac{d}{t},t)=\frac{(2d+1)t+d^2}{(t+d)(2d+1)}$.
\end{theorem}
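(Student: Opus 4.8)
The assertion has two halves: an upper bound $g(1+\tfrac{d}{t},t)\le\frac{(2d+1)t+d^2}{(t+d)(2d+1)}$ valid for every $t\ge2$ and every $d\ge1$, and its tightness in the range $1<s\le2$, i.e.\ $1\le d\le t$. I would establish the bound by a double-counting / linear-programming argument built on the subspace structure of a PIR array code, and the tightness by constructing codes that meet it.

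\emph{The upper bound.} Fix a $[t\times m,p]$ $k$-PIR array code with $p=t+d$. By the normalizations recalled above, each column $c$ contributes a subspace $V_c\subseteq\mathbb{F}^p$ with $\dim V_c=t$, so $\dim V_c^{\perp}=d$; put $T_c=\mathrm{supp}(V_c^{\perp})$, so $|T_c|\ge d$, and note that column $c$ recovers $x_i$ by itself exactly when $i\notin T_c$, in which case $x_i$ is one of its cells. Letting $a_i$ be the number of columns that recover $x_i$ alone, we get $\sum_i a_i=\sum_c(p-|T_c|)\le tm$. Fixing $i$ and splitting its $k$ disjoint recovery sets into the $a_i$ singletons and $k-a_i$ sets of size $\ge2$, which must be drawn from the $m-a_i$ columns with $i\in T_c$, gives $m\ge a_i+2(k-a_i)$, hence $a_i\ge 2k-m$; summing reproduces only the weaker bound $g\le1-\frac{d}{2(t+d)}$, short of the target by the factor $\frac{2(d+1)}{2d+1}$. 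The gain comes from a closer look at the size-$\ge2$ sets: a pair $\{c,c'\}$ with $i\in T_c\cap T_{c'}$ recovers $x_i$ if and only if $V_c^{\perp}\cap V_{c'}^{\perp}\subseteq\{v:v_i=0\}$, and similarly for larger sets via $\bigcap_{c\in S}V_c^{\perp}$. I would turn this into quantitative ``matching feasibility'' inequalities for each $i$ --- the columns available for its size-$2$ sets split, up to this transversality relation, into clusters whose sizes are controlled by the codimension $d$, so a matching of the required size forces extra columns --- and then feed these per-item inequalities together with the global counts $\sum_c|T_c|\ge dm$ and $\sum_i a_i=\sum_c(p-|T_c|)$ into a small linear program; the point to verify is that its optimum is exactly $1-\frac{d^2+d}{(t+d)(2d+1)}$. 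Finally I would read off the limiting statement for $g(s)$ and cross-check it against Theorem~3.

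\emph{Tightness for $1<s\le2$.} The natural target parameters are $m=(t+d)(2d+1)$ and $k=(2d+1)t+d^2$ (any common multiple works; for $s=2$ the smaller instance $m=4t+2$, $k=3t+1$ suffices), and the arithmetic shows that a rate-optimal code \emph{must}, for every $i$, use every one of its $m$ columns: exactly $t(2d+1)-d$ of them store $x_i$ outright, and the remaining $2d(d+1)$ of them split into $d(d+1)$ pairs each spanning $x_i$. I would give an explicit combinatorial construction of the $m$ columns --- choosing the dual spaces $V_c^{\perp}$ (equivalently the cell contents) so that the supports $T_c$ cover each coordinate the prescribed number of times and the prescribed pairs are transverse at the relevant coordinate --- handling $s=2$ by a separate, somewhat simpler construction, and verify the $k$-PIR property directly by listing the $k$ disjoint recovery sets of each item.

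The main obstacle is the refined count for the upper bound: upgrading ``each size-$\ge2$ set costs $\ge2$ columns'' to the sharp statement needs the correct cluster-size lemma for the transversality relation $V_c^{\perp}\cap V_{c'}^{\perp}\subseteq\{v:v_i=0\}$ and then a verification that the resulting linear program does not merely return the naive bound. On the construction side the delicate step is the simultaneous balancing --- arranging supports and recovery pairs so that all $p=t+d$ coordinates are served at once --- which I expect to require a careful, possibly field-size-dependent, choice of the subspaces.
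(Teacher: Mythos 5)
Your reduction to the counts $a_i$ and the size-$\ge 2$ recovery sets is the right opening move, and your arithmetic for the profile an optimal code must have is correct, but both halves of the argument stop exactly where the real work begins. (Note that this theorem is imported from Blackburn--Etzion and not reproved in this paper; the shape of the genuine argument is, however, visible in the paper's proof of Theorem~\ref{mainupperbound}, which generalizes it to $s>2$.) For the upper bound, the improvement from $1-\frac{d}{2(t+d)}$ to $1-\frac{d^2+d}{(t+d)(2d+1)}$ does not come from a transversality or ``cluster'' analysis of the pairs. It comes from splitting the $m$ columns into the $L$ all-singleton servers and the $N=m-L$ servers possessing a composite cell, and exploiting two facts: (i) a composite server holds at most $t-1$ singletons, so $\sum_i a_i\le tm-N$, sharpening your first estimate to $\sum_i k_i\le\frac{(p+t)m-N}{2}$; and (ii) every recovery set of size at least two must contain a composite server (all-singleton servers avoiding $x_i$ span only coordinates other than $i$), which with $b_i$ denoting the composite servers holding $x_i$ as a singleton gives $k_i\le a_i+(N-b_i)$ and hence $\sum_i k_i\le tL+pN=tm+dN$. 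One estimate decreases in $N$, the other increases, and the crossing point $N=\frac{dm}{2d+1}$ yields $\sum_i k_i\le\bigl(t+\frac{d^2}{2d+1}\bigr)m$, which is exactly the claimed bound after dividing by $pm$. Your proposed ``cluster-size lemma'' for the relation $V_c^{\perp}\cap V_{c'}^{\perp}\subseteq\{v:v_i=0\}$ is never stated, and it is doubtful it exists in the form you need: the true obstruction is the global scarcity of composite servers, a quantity your linear program never sees because the only global inputs you feed it are $\sum_c|T_c|\ge dm$ and $\sum_i a_i\le tm$, which cannot distinguish the two server types.

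For tightness you have correctly computed that, with $m=(t+d)(2d+1)$ and $k=(2d+1)t+d^2$, every item $x_i$ must be held outright by $t(2d+1)-d$ columns while the remaining $2d(d+1)$ split into $d(d+1)$ spanning pairs; but no construction is actually given, and ``arrange the supports so that the prescribed pairs are transverse'' is precisely the statement to be proved. The known constructions use singleton servers $\overline{A}$ (missing a $d$-set $A$) and servers $\Sigma B$ storing $t-1$ singletons plus the sum over a $(d+1)$-set $B$; such a pair spans $x_i$ if and only if $A\cap B=\{x_i\}$, and the entire difficulty is choosing multisets of $A$'s and $B$'s so that for \emph{every} $i$ the servers missing $x_i$ admit a perfect matching under this condition. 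Blackburn--Etzion achieve this by taking all ${t+d\choose d}$ sets $A$ and all ${t+d\choose d+1}$ sets $B$ with suitable multiplicities, and Section~\ref{minservers} of this paper does it with far fewer servers via the explicit families $A_j$, $B_j$ and Lemmas~\ref{choiceofj}--\ref{inequality}. No field-size issues arise --- everything works over $\mathbb{F}_2$ --- but without an explicit choice of the sets and a matching argument, the tightness half remains unproved.
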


\begin{theorem} \label{severalconstr}
There exist PIR array codes satisfying the following parameters:
\begin{enumerate}
    \item Let $s=\frac{rt-(r-2)r-1}{t}$ and then $p=rt-(r-2)r-1$, where $3\le r \le t$. Then $g(s,t)\ge \frac{1}{2}+\frac{t-r+1}{2(rt-(r-2)r-1)}$.
    \item Let $s=r+d/t$ and then $p=rt+d$, where $r\ge 2$ is an integer, $t\ge r$, $1\le d \le t-1$. Then $g(s,t)\ge 1-\frac{(rt+d-t+r)(rt+d-t)}{(rt+d)(2rt+2d-2t+r)}$.
    \item Let $s>2$ be an integer and $t\ge s$. Then $g(s,t)\ge \frac{st+t+1}{s(2t+1)}$.
    \item Let $s>2$ be an integer. Let $(s-1)t=lb$ and $t\ge l+b$, where $l$ and $b$ are positive integers. Then $g(s,t)\ge \frac{s+1}{2s}-\frac{l}{2st}$.
\end{enumerate}
\end{theorem}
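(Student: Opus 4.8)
\emph{Overall approach.} All four items are existence statements, so for each I would exhibit an explicit $[t\times m,p]$ array code with $p=st$ and verify its $k$-PIR property, where $(k,m)$ is the numerator/denominator of the claimed rate after cancellation --- for instance $k=st+t+1$, $m=s(2t+1)$ in (3), and $m=2p$, $k=p+t-r+1$ in (1). The construction must simultaneously achieve the right per-server cell budget, $tm/p=s$, and the property that every data item has at least $k$ pairwise disjoint recovery sets.

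\emph{Construction scheme.} In each case I would use two kinds of servers. \emph{Window servers}: label the $p$ items cyclically by $\mathbb{Z}_p$ (or by a product $\mathbb{Z}_a\times\mathbb{Z}_b$ when the hypotheses provide a factorization, such as $(s-1)t=lb$ in (4)), and let a window server store the $t$ items in some length-$t$ interval; it spans precisely the coordinate subspace of that interval, hence recovers exactly its $t$ items, and distinct windows are trivially disjoint as sets of columns. \emph{Coded servers}: these store sums of windows over $\mathbb{F}$, arranged in complementary pairs so that a server holding $W+y$ together with a server holding $y$ recovers every item of the window $W$. Which windows are used, and how many servers of each type, is tuned so that the cell count gives exactly $s=p/t$ and so that the disjoint-recovery count below attains $k$; for (2) and (4) I would additionally take several rotated copies of the window family, glued along shared coded servers, which is what pushes the rate toward the general bound $g(s)\le\frac{s+1}{2s}$.

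\emph{Verifying the $k$-PIR property.} Fix a data item $x_j$. Its recovery sets are of two kinds: (a) each window containing $x_j$, used as a disjoint singleton; (b) pairs of coded servers, each pair recovering $x_j$ by cancelling the auxiliary part. For (b) I would, for the fixed $j$, set up a bipartite ``recovery graph'' between the coded servers that carry $x_j$ inside their sum and those that carry the matching auxiliary data, and extract an (almost) perfect matching --- either directly from the cyclic symmetry by a single rotation argument that works for all $j$ at once, or via Hall's theorem after checking the graph is regular enough. Adding the matched pairs from (b) to the singletons from (a) and counting gives $\ge k$ disjoint recovery sets; the side conditions $t\ge r$, $t\ge s$, $1\le d\le t-1$, $t\ge l+b$ are exactly what guarantee enough windows and a dense enough recovery graph. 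Finally one checks $tm/p=s$, that $k/m$ reduces to the stated fraction case by case, and that the two normalization assumptions of \cite{blackburn} hold (one-server-recoverable items stored as singleton cells; linearly independent cells within a server) --- routine for this type of code.

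\emph{Main obstacle.} The delicate point is the simultaneous form of step (b): designing the auxiliary windows so that the coded servers can be paired to recover \emph{every} item, with all recovery sets kept pairwise disjoint and with the count landing on the exact target $k$ rather than one or two short. This is where the cyclic (or product) symmetry has to be built in with care, and where the ``$+1$'' in (3), the $t-r+1$ in (1), and the precise denominators in (2) and (4) are won or lost; it is correspondingly the place where the arithmetic hypotheses on $t,r,d,l,b$ must genuinely be used rather than merely recorded.
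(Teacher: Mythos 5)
First, a point of context: the paper does not prove Theorem \ref{severalconstr} at all. It is a verbatim list of Theorems 10--13 (Constructions 7--10) of Blackburn and Etzion \cite{blackburn}, quoted as known results; the only place the present paper engages with them is Section \ref{rate}, where their rates are compared against a new construction. So there is no in-paper proof to match your argument against, and your attempt must stand on its own.

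Judged that way, what you have written is a template, not a proof. The architecture you sketch --- singleton (``window'') servers plus servers carrying a summation cell, with recovery sets consisting of servers containing $x_i$ as a singleton together with matched pairs of the remaining servers --- is indeed the architecture of all four Blackburn--Etzion constructions (and of the constructions in Sections \ref{minservers} and \ref{constru} of this paper). But every quantity that actually determines the four claimed rates is left open: which index sets the windows range over, which items each summation cell adds up, and above all the multiplicity with which each server appears. In this family of codes the rate is exactly the weighted average $\frac{t+p}{2p}\frac{\alpha}{\alpha+\beta}+\frac{t+p-1}{2p}\frac{\beta}{\alpha+\beta}$ determined by those multiplicities, and the multiplicities are in turn forced by ratio conditions needed for the perfect matchings (compare the ratios $\eta_r:\eta_{r+1}$ in Section \ref{constru}, or the $d_1,d_2$ multiplicities in Section \ref{minservers}); choosing them correctly is the entire content of each of the four items, and your own closing paragraph concedes this is ``where the exact target $k$ is won or lost,'' i.e., the step you have not carried out. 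Two further local problems. (i) Your recovery mechanism for coded pairs (``a server holding $W+y$ together with a server holding $y$ recovers every item of the window $W$'') is not the one compatible with your own rate accounting: to get the weight $\frac{t+p-1}{2p}$ on the non-singleton servers, each such server must have $t-1$ singleton cells and one summation cell, and a matched pair then isolates the \emph{single} item $x_i$ by cancellation; a server consisting purely of coded cells would drag the corresponding weight down to $\frac{1}{2}$ and the claimed rates would not be reached. (ii) Hall's theorem cannot be invoked ``after checking the graph is regular enough'' before the multiplicities are fixed --- regularity of the bipartite recovery graph is precisely what the unspecified multiplicities must be engineered to produce. So the proposal identifies the right family of constructions but establishes none of the four bounds.
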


Our contribution to PIR array codes is three-fold.

First, given $t\ge2$, $1\le d\le t$ and $s=1+\frac{d}{t}$ (then $1<s\le2$), although PIR array codes meeting the upper bound have been constructed by Blackburn and Etzion, the number of columns in their array codes is impractically large ($m={{t+d}\choose t}\frac{v}{d}+{{t+d}\choose d+1}\frac{v}{t}$ where $v$ is the least common multiple of $d$ and $t$). That is, a corresponding scheme requires a lot of servers in order to meet the optimal rate. A scheme with a small number of servers is of interest due to applicable reasons. 
Therefore we shall consider the following problem: what is the smallest number of servers $m$ such that an array code with optimal rate $k/m$ exists? In \cite{blackburn} the case $d=1$ is solved. In this paper, we show that when $t>d^2-d$, the smallest number of servers such that an array code with optimal rate exists is $m=p(2d+1)/\omega$ where $\omega=\gcd\{d^2+d,p(2d+1)\}$.

Second, for the case $s>2$, we derive a new upper bound on the rate of a PIR array code which improves the result shown in Theorem \ref{upperbound}.

Finally, for the case $s>2$, a new construction appears in \cite[Section 4]{blackburn} by Blackburn and Etzion. We deeply analyze their construction in the following aspects. First, a minor problem of their construction is that the number of servers is very large. By a slight modification, we propose another construction which has a much smaller number of servers compared to the original construction, with only a slight sacrifice in the rate. Second, we shall demonstrate that both constructions can produce codes of larger rate than all the other existing ones in Theorem \ref{severalconstr}. Finally, we have some discussions regarding the potential optimality of this construction. 

The rest of the paper is organized as follows. In Section \ref{minservers} we analyze the case $1<s\le2$ and determine the minimum number of servers needed to implement an array code with optimal rate for $t>d^2-d$. In Section \ref{newupperbound} we derive a new upper bound on the rate for $s>2$. In Section \ref{constru}, for the case $s>2$, we analyze the construction by Blackburn and Etzion in several aspects. Section \ref{conclusion} concludes the paper.

\section{$1<s\le2$: optimal PIR array codes with minimum number of servers} \label{minservers}

In this section we deal with the case $1<s\le2$, where $s=1+\frac{d}{t}$. In this case PIR array codes with optimal rate have been constructed by Blackburn and Etzion \cite{blackburn}. However the number of servers $m$ in their constructions is impractically large ($m={{t+d}\choose t}\frac{v}{d}+{{t+d}\choose d+1}\frac{v}{t}$ where $v$ is the least common multiple of $d$ and $t$). For applicable reasons, array codes with a smaller number of servers are of interest. 
A natural question is to construct PIR array codes with minimum number of servers while maintaining the optimal rate.

Let $s=1+d/t$, where $1\le d \le t$, then $1<s\le2$ and $p=ts=t+d$. The upper bound of the rate has been shown to be $1-\frac{d^2+d}{p(2d+1)}$ in Theorem \ref{upperbound}. Let $\omega$ be the greatest common divisor of $d^2+d$ and $p(2d+1)$, then the smallest possible number of servers for a PIR array code with optimal rate is then $p(2d+1)/\omega$. In this section we are going to prove for a certain range of parameters that there do exist such PIR array codes with $p(2d+1)/\omega$ servers.

Since $\omega|d^2+d$, then we can split $\omega$ as $\omega=\omega_1\omega_2$, where $\omega_1|d$, $\omega_2|(d+1)$. Moreover, since $d$ and $d+1$ are relatively prime, then $\omega_1$ and $\omega_2$ are relatively prime. Denote $d=\omega_1d_1$ and $d+1=\omega_2d_2$. Furthermore, since $\gcd(d,2d+1)=1$ and $\gcd(d+1,2d+1)=1$, then we can deduce that $\omega|p$. Denote $p=\mu\omega=\mu\omega_1\omega_2$ and then the desired number of servers will be $m=\frac{p(2d+1)}{\omega}=\mu(2d+1)$.

We first define two types of servers. For a server of the first type, every cell of the server contains a singleton item in $\{x_1,x_2,\dots,x_p\}$ and we call it a singleton server. Such a server contains $t$ singleton cells, say $\{y_1,y_2,\dots,y_t\}$, and we denote this server by $\overline{A}$, where $A=\{x_1,x_2,\dots,x_p\}\backslash\{y_1,y_2,\dots,y_t\}$.  For a server of the second type, $t-1$ cells of the server contain a singleton item, say $\{z_1,z_2,\dots,z_{t-1}\}$, and the remaining cell contains the summation of all the items except for $\{z_1,z_2,\dots,z_{t-1}\}$. We call it a $\Sigma$-server and denote it by $\Sigma B$, where $B=\{x_1,x_2,\dots,x_p\}\backslash\{z_1,z_2,\dots,z_{t-1}\}$. The PIR array code we shall construct consists of these two types of servers defined above. Within this section all indices are reduced modulo $p$.\\

\noindent\fbox{%
  \parbox{\textwidth}{%
Construction (given $t$, $d$ satisfying $t>d^2-d$):\\

1. We have $\mu(d+1)$ singleton servers as follows. For $0\le j \le \mu\omega_2-1$, define $A_j=\{ x_{j+\alpha+\beta\mu\omega_2}: 0\le \alpha\le d_1-1, 0\le \beta\le \omega_1-1 \}$. Since $d_1\omega_1=d<p=\mu \omega_1\omega_2$, we have $d_1<\mu \omega_2$ and thus there are no repeated items in each $A_j$. Therefore the cardinality of $A_j$ is exactly $d_1\omega_1=d$. The $\mu(d+1)=\mu\omega_2d_2$ singleton servers are the servers $\overline{A_0},\overline{A_1},\dots,\overline{A_{\mu\omega_2-1}}$, each appearing $d_2$ times.\\

2. We have $\mu d$ $\Sigma$-servers as follows. For $0\le j \le \mu\omega_1-1$, define $B_j=\{ x_{j+\gamma d_1+\lambda \mu\omega_1}: 0\le \gamma\le d_2-1, 0\le \lambda\le\omega_2-1 \}$. Since $t>d^2-d$, we have $d_1\omega_1(d_2-1)\omega_2\leq d^2< p=\mu \omega_1\omega_2$, so $d_1(d_2-1)< \mu $ and thus there are no repeated items in each $B_j$. Therefore the cardinality of $B_j$ is exactly $d_2\omega_2=d+1$. The $\mu d=\mu\omega_1d_1$ $\Sigma$-servers are the servers $\Sigma B_0,\Sigma B_1,\dots,\Sigma B_{\mu\omega_1-1}$, each appearing $d_1$ times.
  }%
}

\medskip

Next we shall show that the construction above produces PIR array codes with optimal rate, for $t>d^2-d$. We discuss in two separated cases, $t\ge d^2$ and $d^2-d<t<d^2$.

\subsection{$t\ge d^2$}

\begin{theorem} \label{first}
For $t\ge d^2$, there exist $k$-PIR array codes with $m=\mu(2d+1)$ servers such that the rate $\frac{k}{m}$ equals $g(s,t)=1-\frac{d^2+d}{p(2d+1)}$.
\end{theorem}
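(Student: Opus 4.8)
The plan is to exhibit, for every item $x_i$, exactly $k=\mu(2d+1)-d_1d_2$ pairwise disjoint subsets of columns each spanning $x_i$, where this value of $k$ is forced as follows: the construction uses $\mu(d+1)$ singleton servers and $\mu d$ $\Sigma$-servers, so $m=\mu(2d+1)$, and since $p=\mu\omega$ and $d^2+d=d(d+1)=\omega_1\omega_2 d_1d_2=\omega d_1d_2$, the target rate rewrites as $1-\frac{d^2+d}{p(2d+1)}=1-\frac{d_1d_2}{\mu(2d+1)}=\frac{k}{m}$. Optimality, together with minimality of the number of servers, is then immediate from Theorem~\ref{upperbound}.

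First I would record a regularity property of the two set families. As $(j,\beta)$ ranges over $0\le j\le\mu\omega_2-1$, $0\le\beta\le\omega_1-1$, the value $j+\beta\mu\omega_2$ runs over a complete residue system modulo $p$; hence for each fixed $\alpha$ the index $j+\alpha+\beta\mu\omega_2$ takes every value exactly once, and so every item lies in exactly $d_1$ of the distinct sets $A_0,\dots,A_{\mu\omega_2-1}$. The analogous computation with $j+\lambda\mu\omega_1$ in place of $j+\beta\mu\omega_2$ shows every item lies in exactly $d_2$ of the distinct sets $B_0,\dots,B_{\mu\omega_1-1}$. Now fix $x_i$ and call a server \emph{good for $x_i$} if one of its cells is the singleton $x_i$ and \emph{bad} otherwise; concretely $\overline{A_j}$ is good iff $i\notin A_j$, and $\Sigma B_{j'}$ is good iff $i\notin B_{j'}$. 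By the regularity there are exactly $d_1d_2$ bad singleton servers ($d_1$ bad types, each appearing $d_2$ times) and exactly $d_1d_2$ bad $\Sigma$-servers ($d_2$ bad types, each appearing $d_1$ times), while the $\mu(2d+1)-2d_1d_2$ good servers each span $x_i$ on their own, giving that many pairwise disjoint singleton column-sets.

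To obtain the remaining $d_1d_2$ subsets I would pair each bad singleton server with a bad $\Sigma$-server. A direct linear-algebra check shows that, when $i\in A_j\cap B_{j'}$, the union $\overline{A_j}\cup\Sigma B_{j'}$ spans $x_i$ precisely when $A_j\cap B_{j'}=\{i\}$: the $\Sigma$-cell supplies $\sum_{l\in B_{j'}}x_l$, and the singleton cells of $\overline{A_j}$ let one strip off the terms $x_l$ with $l\in B_{j'}\setminus\{i\}$ exactly when those indices avoid $A_j$. The crucial claim is then: for $t\ge d^2$, \emph{every} bad pair $(A_j,B_{j'})$ satisfies $A_j\cap B_{j'}=\{i\}$. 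Granting it, the $d_1d_2$ bad singleton servers and the $d_1d_2$ bad $\Sigma$-servers span a complete bipartite graph $K_{d_1d_2,d_1d_2}$, which has a perfect matching; the $d_1d_2$ matched pairs are pairwise disjoint and disjoint from the good singletons, yielding $\mu(2d+1)-2d_1d_2+d_1d_2=k$ pairwise disjoint subsets spanning $x_i$, as required.

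The step I expect to be the main obstacle is the intersection claim. Here $A_j$ is a union of $\omega_1$ intervals of length $d_1$ spaced $\mu\omega_2$ apart, while $B_{j'}$ is a union of $\omega_2$ arithmetic progressions of common difference $d_1$ and length $d_2$ (hence of spread $(d_2-1)d_1<d$) spaced $\mu\omega_1$ apart. Two common elements lying in the same interval of $A_j$ differ by less than $d_1$, and if they also lie in the same progression of $B_{j'}$ their difference is a multiple of $d_1$, forcing them to coincide; the remaining cases lead to relations of the form $a d_1+b\,\mu\omega_1\equiv 0$ or $a d_1+b\,\mu\omega_2\equiv 0\pmod p$ with the ranges of $a,b$ bounded in terms of $d_1,d_2,\omega_1,\omega_2$, and the hypothesis $p=t+d\ge d^2+d$ is exactly what forces each such left-hand side to vanish as an integer, hence the two elements to coincide. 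I anticipate this is a delicate but elementary case analysis on residues, and that the bound $t\ge d^2$ is sharp here — which is precisely why the regime $d^2-d<t<d^2$ has to be treated separately in the next subsection with a modified argument.
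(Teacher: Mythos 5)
Your proposal is correct and follows essentially the same route as the paper: the same split into servers containing the singleton $x_i$ versus the $d_1d_2+d_1d_2$ that do not, the same key claim that $t\ge d^2$ (i.e.\ $\mu\ge d_1d_2$) forces $|A_j\cap B_{j'}|\le 1$ via the congruence $(\alpha_2-\alpha_1)\equiv(\gamma_2-\gamma_1)d_1\pmod{\mu}$, and the same complete-bipartite pairing of the remaining servers, yielding $k=\mu(2d+1)-d_1d_2$. Your closing remark that the bound is where the argument breaks and that $d^2-d<t<d^2$ needs a perfect-matching argument instead of completeness is exactly how the paper proceeds in the next subsection.
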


\begin{proof}
We claim that when $t\ge d^2$, for any $A_{j_1}$ and $B_{j_2}$, $|A_{j_1}\bigcap B_{j_2}|\le 1$. Suppose otherwise, we have at least two distinct items in $A_{j_1}\bigcap B_{j_2}$, that is, there exist $\alpha_1,\alpha_2,\beta_1,\beta_2,\gamma_1,\gamma_2,\lambda_1,\lambda_2$ such that $$j_1+\alpha_1+\beta_1\mu\omega_2\equiv j_2+\gamma_1d_1+\lambda_1\mu\omega_1 \pmod{p}$$ and $$j_1+\alpha_2+\beta_2\mu\omega_2\equiv j_2+\gamma_2d_1+\lambda_2\mu\omega_1 \pmod{p}.$$
Do subtractions using these two equations above, we can get $$(\alpha_2-\alpha_1)+(\beta_2-\beta_1)\mu\omega_2\equiv (\gamma_2-\gamma_1)d_1+(\lambda_2-\lambda_1)\mu\omega_1 \pmod{p}.$$
Then we have $$(\alpha_2-\alpha_1)\equiv (\gamma_2-\gamma_1)d_1 \pmod{\mu}.$$
When $t\ge d^2$, $\mu=\frac{p}{\omega_1\omega_2}\ge \frac{d(d+1)}{\omega_1\omega_2}=d_1d_2$. Since $1-d_1\le \alpha_2-\alpha_1 \le d_1-1$ and $1-d_2 \le \gamma_2-\gamma_1 \le d_2-1$, so $(\alpha_2-\alpha_1)\equiv (\gamma_2-\gamma_1)d_1 \pmod{\mu}$ holds if and only if $\alpha_2-\alpha_1=\gamma_2-\gamma_1=0$. Then we have $(\beta_2-\beta_1)\mu\omega_2\equiv (\lambda_2-\lambda_1)\mu\omega_1 \pmod{p}$ and equivalently $$(\beta_2-\beta_1)\omega_2\equiv (\lambda_2-\lambda_1)\omega_1 \pmod{\omega_1\omega_2}.$$
Since $\omega_1$ and $\omega_2$ are relatively prime, $1-\omega_1\le\beta_2-\beta_1\le\omega_1-1$ and $1-\omega_2\le \lambda_2-\lambda_1\le \omega_2-1$, then $(\beta_2-\beta_1)\omega_2\equiv (\lambda_2-\lambda_1)\omega_1 \pmod{\omega_1\omega_2}$ holds if and only if $\beta_2-\beta_1=\lambda_2-\lambda_1=0$. Now we arrive at a contradiction to the existence of two distinct items in $A_{j_1}\bigcap B_{j_2}$. Therefore $|A_{j_1}\bigcap B_{j_2}|\le 1$ as claimed.

To analyze the $k$-PIR property, it suffices to analyze how to span an item $x_0$ since obviously the construction is symmetric for all items $\{x_0,x_1,\dots,x_p\}$. In the $\mu(d+1)$ singleton servers there are totally $t\mu(d+1)$ singleton cells and $\frac{t\mu(d+1)}{p}$ among them contain the singleton $x_0$. So $\frac{t\mu(d+1)}{p}=\frac{td_2}{\omega_1}$ singleton servers contain a singleton $x_0$ and the rest $\frac{d\mu(d+1)}{p}=d_1d_2$ singleton servers do not. In the $\mu d$ $\Sigma$-servers there are totally $(t-1)\mu d$ singleton cells and $\frac{(t-1)\mu d}{p}$ among them contain the singleton $x_0$. So $\frac{(t-1)\mu d}{p}=\frac{(t-1)d_1}{\omega_2}$ $\Sigma$-servers contain a singleton $x_0$ and the rest $\frac{(d+1)\mu d}{p}=d_1d_2$ $\Sigma$-servers do not. Arbitrarily choose one of the $d_1d_2$ singleton servers not containing the singleton $x_0$, then the server should be $\overline{A_{j_1}}$ where $x_0\in A_{j_1}$. Arbitrarily choose one of the $d_1d_2$ $\Sigma$-servers not containing the singleton $x_0$, then the server should be $\Sigma B_{j_2}$ where $x_0\in B_{j_2}$. Since we have claimed $|A_{j_1}\bigcap B_{j_2}|\le 1$, then the server $\overline{A_{j_1}}$ knows all the items except for $x_0$ among the summation stored in the non-singleton cell of the server $\Sigma B_{j_2}$, so they two together can span the item $x_0$.

So we can finally deduce that $k=\frac{td_2}{\omega_1}+\frac{(t-1)d_1}{\omega_2}+d_1d_2=\frac{d^2+2td+t}{\omega_1\omega_2}$. Thus the rate of this array code is $k/m=\frac{d^2+2td+t}{\mu(2d+1)\omega_1\omega_2}=1-\frac{d^2+d}{p(2d+1)}$, meeting the upper bound.
\end{proof}

\begin{remark}
Build a bipartite graph where the first part of vertices corresponds to the set of singleton servers not containing the singleton $x_0$ and the second part of vertices corresponds to the set of $\Sigma$-servers not containing the singleton $x_0$. An edge between two vertices indicates that these two servers can span $x_0$ together. Then in the proof above, we are actually saying that when $t\ge d^2$, we will have a complete bipartite graph. This constraint is actually not necessary. The essential constraint is only to guarantee a perfect matching in this bipartite graph, i.e., to guarantee that all those servers not containing the singleton $x_0$ could be divided into pairs, with each pair capable of spanning $x_0$. Following this idea, we extend Theorem \ref{first} to a wider range of parameters in the next subsection.
\end{remark}

\subsection{$d^2-d<t<d^2$}

Before the tedious analysis on this range of parameters, we first provide an example illustrating the essence of the proof.

\begin{example}[$d=5,t=23,p=28$]
The corresponding parameters are $\omega=2$, $\mu=14$, $\omega_1=1$, $d_1=5$, $\omega_2=2$ and $d_2=3$. We have $84$ singleton servers: $\overline{A_j}$, $0\le j \le 27$, each appearing three times, where $A_j=\{x_{j+\alpha}:0\le \alpha \le 4\}$. We have $70$ $\Sigma$-servers: $\Sigma B_j$, $0\le j \le 13$, each appearing five times, where $B_j=\{x_{j+5\gamma+14\lambda: 0\le \gamma\le 2, 0\le \lambda \le 1}\}$. The servers not containing the singleton $x_0$ are: $\Sigma B_0$, $\Sigma B_9$ and $\Sigma B_4$, each appearing five times; $\overline{A_0}$, $\overline{A_{27}}$, $\overline{A_{26}}$, $\overline{A_{25}}$ and $\overline{A_{24}}$, each appearing three times. Note that $\Sigma B_0$ cannot be connected to $\overline{A_{24}}$ since $B_0\bigcap A_{24}=\{0,24\}$. Also $\Sigma B_4$ cannot be connected to $\overline{A_{0}}$ since $B_4\bigcap A_0=\{0,4\}$. A perfect matching of the bipartite graph induced by these servers is shown as follows.

\begin{figure*}[!h]
\centering
\begin{tikzpicture}[scale=0.5]
     \tikzstyle{edge} = [draw,thick,black]
     \draw (-8,2) ellipse [x radius=2, y radius=1];
     \draw (0,2) ellipse [x radius=2, y radius=1];
     \draw (8,2) ellipse [x radius=2, y radius=1];
     \draw (10,-2) ellipse [x radius=2, y radius=1];
     \draw (5,-2) ellipse [x radius=2, y radius=1];
     \draw (0,-2) ellipse [x radius=2, y radius=1];
     \draw (-5,-2) ellipse [x radius=2, y radius=1];
     \draw (-10,-2) ellipse [x radius=2, y radius=1];
     \begin{tiny} \node at (0,2.2)  {$B_9=\{9,14,19,$};\end{tiny}
     \begin{tiny} \node at (0,1.8)  {$23,0,5\}$};\end{tiny}
     \begin{tiny} \node at (-8,2.2)  {$B_0=\{0,5,10,$};\end{tiny}
     \begin{tiny} \node at (-8,1.8)  {$14,19,24\}$};\end{tiny}
     \begin{tiny} \node at (8,2.2)  {$B_4=\{4,9,14,$};\end{tiny}
     \begin{tiny} \node at (8,1.8)  {$18,23,0\}$};\end{tiny}

     \begin{tiny} \node at (-10,-1.8)  {$A_0=\{0,1,$};\end{tiny}
     \begin{tiny} \node at (-10,-2.2)  {$2,3,4\}$};\end{tiny}
     \begin{tiny} \node at (-5,-1.8)  {$A_{27}=\{27,0,$};\end{tiny}
     \begin{tiny} \node at (-5,-2.2)  {$1,2,3\}$};\end{tiny}
     \begin{tiny} \node at (0,-1.8)  {$A_{26}=\{26,27,$};\end{tiny}
     \begin{tiny} \node at (0,-2.2)  {$0,1,2\}$};\end{tiny}
     \begin{tiny} \node at (5,-1.8)  {$A_{25}=\{25,26,$};\end{tiny}
     \begin{tiny} \node at (5,-2.2)  {$27,0,1\}$};\end{tiny}
     \begin{tiny} \node at (10,-1.8)  {$A_{24}=\{24,25,$};\end{tiny}
     \begin{tiny} \node at (10,-2.2)  {$26,27,0\}$};\end{tiny}

     \draw[edge] (-8.8,1)-- (-10.4,-1);
     \draw[edge] (-8.4,1)-- (-10,-1);
     \draw[edge] (-8,1)-- (-9.6,-1);
     \draw[edge] (-7.6,1)--(-5.4,-1);
     \draw[edge] (-7.2,1)-- (-5,-1);
     \draw[edge] (-0.8,1)-- (-4.6,-1);
     \draw[edge] (-0.4,1)-- (-0.4,-1);
     \draw[edge] (0,1)-- (0,-1);
     \draw[edge] (0.4,1)--(0.4,-1);
     \draw[edge] (0.8,1)-- (4.6,-1);
     \draw[edge] (7.2,1)-- (5,-1);
     \draw[edge] (7.6,1)-- (5.4,-1);
     \draw[edge] (8,1)-- (9.6,-1);
     \draw[edge] (8.4,1)--(10,-1);
     \draw[edge] (8.8,1)-- (10.4,-1);

\end{tikzpicture}
\end{figure*}

\end{example}

We shall briefly preview the outline of the proof to come. In the case $d^2-d<t<d^2$, while we stick to the construction in the previous subsection, the bipartite graph induced by those servers not containing the singleton $x_0$ is no longer complete. To deal with this trouble, we shall show that the absent edges are incident to only two kinds of $\Sigma$-servers. To find a perfect matching in the bipartite graph, it suffices to find suitable partners for these two kinds of $\Sigma$-servers first and the rest edges can be chosen arbitrarily.

\begin{lemma} \label{choiceofj}
For $d^2-d < t< d^2$, suppose $|A_{j_1}\bigcap B_{j_2}|>1$ and $0\in A_{j_1}\bigcap B_{j_2}$, then $j_2=0$ or $j_2=\mu\omega_1-d_1(d_2-1)$.
\end{lemma}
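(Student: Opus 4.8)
The plan is to mimic the subtraction argument from the proof of Theorem~\ref{first}, but now track carefully what goes wrong when $\mu < d_1 d_2$ (which is exactly the regime $t < d^2$). Assume $0 \in A_{j_1} \cap B_{j_2}$ and $|A_{j_1} \cap B_{j_2}| > 1$. Writing $0 \equiv j_1 + \alpha_0 + \beta_0 \mu\omega_2 \equiv j_2 + \gamma_0 d_1 + \lambda_0 \mu\omega_1 \pmod p$ for the common item $0$, and $j_1 + \alpha_1 + \beta_1\mu\omega_2 \equiv j_2 + \gamma_1 d_1 + \lambda_1\mu\omega_1 \pmod p$ for a second common item, I would subtract to obtain $(\alpha_1 - \alpha_0) + (\beta_1 - \beta_0)\mu\omega_2 \equiv (\gamma_1 - \gamma_0)d_1 + (\lambda_1 - \lambda_0)\mu\omega_1 \pmod p$, hence $(\alpha_1 - \alpha_0) \equiv (\gamma_1 - \gamma_0)d_1 \pmod \mu$. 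The key difference from the $t \ge d^2$ case: we can no longer conclude $\alpha_1 = \alpha_0$ and $\gamma_1 = \gamma_0$, because $\mu$ may be as small as $d_1(d_2-1)+1$ (this is what $t > d^2 - d$ buys us, via $d_1\omega_1(d_2-1)\omega_2 \le d^2$). So I would enumerate the near-solutions of $(\alpha_1 - \alpha_0) \equiv (\gamma_1-\gamma_0)d_1 \pmod\mu$ with $|\alpha_1 - \alpha_0| \le d_1 - 1$ and $|\gamma_1 - \gamma_0| \le d_2 - 1$: besides the trivial one, the constraint $d_1(d_2-1) < \mu$ forces the only possibility to be $\gamma_1 - \gamma_0 = \pm(d_2 - 1)$ together with $\alpha_1 - \alpha_0 = \mp(\mu - d_1(d_2-1))$ (and symmetrically), i.e. a single ``wrap-around'' solution.

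Next I would feed this back into the full congruence modulo $p$. In the trivial branch $\alpha_1 = \alpha_0$, $\gamma_1 = \gamma_0$, the argument of Theorem~\ref{first} goes through verbatim (using $\gcd(\omega_1,\omega_2)=1$ and the ranges of $\beta,\lambda$) to give $\beta_1 = \beta_0$, $\lambda_1 = \lambda_0$, contradicting distinctness of the two items. So we must be in the nontrivial branch. There, substituting $\alpha_1 - \alpha_0$ and $\gamma_1 - \gamma_0$ back into $(\alpha_1 - \alpha_0) + (\beta_1-\beta_0)\mu\omega_2 \equiv (\gamma_1-\gamma_0)d_1 + (\lambda_1-\lambda_0)\mu\omega_1 \pmod p$ and cancelling the $\mu$ that now divides every term (after the wrap-around substitution the constant parts differ by a multiple of $\mu$), I get a congruence modulo $\omega_1\omega_2$ in $\beta_1 - \beta_0$, $\lambda_1 - \lambda_0$, and a fixed integer coming from $(\gamma_1-\gamma_0)d_1 - (\alpha_1-\alpha_0)$ divided by $\mu$; coprimality of $\omega_1,\omega_2$ together with the ranges pins $\beta_1-\beta_0$ and $\lambda_1-\lambda_0$ down uniquely. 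This determines the index $j_2$ up to the two choices of sign in the wrap-around branch, and a short computation should identify them as exactly $j_2 \equiv 0$ and $j_2 \equiv \mu\omega_1 - d_1(d_2-1) \pmod{\mu\omega_1}$ (the reduction is modulo $\mu\omega_1$ since $B$-indices run over $0 \le j \le \mu\omega_1 - 1$). Concretely: the item $0$ lies in $B_{j_2}$ means $j_2 \equiv -\gamma_0 d_1 - \lambda_0\mu\omega_1 \pmod p$, so $j_2 \equiv -\gamma_0 d_1 \pmod{\mu\omega_1}$; the non-triviality forces $\gamma_0 \in \{0, d_2-1\}$ (one endpoint, since $\gamma_1 - \gamma_0 = \pm(d_2-1)$ requires one of $\gamma_0,\gamma_1$ to be $0$ and the other $d_2-1$), giving $j_2 \equiv 0$ or $j_2 \equiv -(d_2-1)d_1 \equiv \mu\omega_1 - d_1(d_2-1) \pmod{\mu\omega_1}$.

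I expect the main obstacle to be bookkeeping rather than conceptual: one must be scrupulous about which modulus is in play at each stage ($p$, then $\mu$, then $\omega_1\omega_2$, then $\mu\omega_1$) and about the exact integer ranges of $\alpha,\beta,\gamma,\lambda$, because the whole lemma hinges on a single ``off by a wrap'' solution surviving and every other near-solution being killed by the inequality $d_1(d_2-1) < \mu$, equivalently $t > d^2 - d$. A secondary subtlety is handling the case $\omega_1 = 1$ (as in the worked example $d=5, t=23$), where $d_1 = d$ and the $\beta$-variables are absent; the argument should degenerate gracefully, but it is worth checking that the two claimed values of $j_2$ are still distinct modulo $\mu\omega_1 = \mu$ — they are, precisely because $0 < d_1(d_2-1) < \mu$. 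Once $j_2$ is shown to take only these two values, the lemma is proved; the remark following Theorem~\ref{first} already explains how this feeds into constructing the perfect matching.
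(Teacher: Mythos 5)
Your proposal is correct and follows essentially the same route as the paper: the same subtraction argument reduces to $(\alpha_1-\alpha_0)\equiv(\gamma_1-\gamma_0)d_1\pmod{\mu}$, the bound $d_1(d_2-1)<\mu$ (from $t>d^2-d$) kills all but the trivial solution and the single wrap-around solution $\gamma_1-\gamma_0=\pm(d_2-1)$, which forces $\gamma_0\in\{0,d_2-1\}$ and hence $j_2=0$ or $j_2=\mu\omega_1-d_1(d_2-1)$. The intermediate step about pinning down $\beta_1-\beta_0$ and $\lambda_1-\lambda_0$ modulo $\omega_1\omega_2$ is unnecessary for the conclusion (your ``concretely'' derivation via $j_2\equiv-\gamma_0 d_1\pmod{\mu\omega_1}$ already suffices, and matches the paper's Cases II and III).
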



\begin{proof}
There exist $\alpha_1,\alpha_2,\beta_1,\beta_2,\gamma_1,\gamma_2,\lambda_1,\lambda_2$ such that $$j_1+\alpha_1+\beta_1\mu\omega_2\equiv j_2+\gamma_1d_1+\lambda_1\mu\omega_1 \equiv0\pmod{p}$$ and $$j_1+\alpha_2+\beta_2\mu\omega_2\equiv j_2+\gamma_2d_1+\lambda_2\mu\omega_1 \pmod{p}.$$
Do subtractions using these two equations above, we can get $$(\alpha_2-\alpha_1)+(\beta_2-\beta_1)\mu\omega_2\equiv (\gamma_2-\gamma_1)d_1+(\lambda_2-\lambda_1)\mu\omega_1 \pmod{p}.$$
Then we have $(\alpha_2-\alpha_1)\equiv (\gamma_2-\gamma_1)d_1 \pmod{\mu}$. When $t>d^2-d$, $\mu=\frac{p}{\omega_1\omega_2}> \frac{d^2}{\omega_1\omega_2}\ge d_1(d_2-1)$. Since $1-d_1\le \alpha_2-\alpha_1 \le d_1-1$ and $1-d_2 \le \gamma_2-\gamma_1 \le d_2-1$, so $(\alpha_2-\alpha_1)\equiv (\gamma_2-\gamma_1)d_1 \pmod{\mu}$ holds if and only if one of the following holds:

$\bullet$ Case I. $\alpha_2-\alpha_1=\gamma_2-\gamma_1=0$. Then by the same analysis as in Theorem \ref{first} we will arrive at a contradiction to $|A_{j_1}\bigcap B_{j_2}|>1$. So this case is impossible.

$\bullet$ Case II. $\gamma_2-\gamma_1=d_2-1$, and consequently $\gamma_2=d_2-1$ and $\gamma_1=0$. Then we have $j_2+\lambda_1\mu\omega_1 \equiv0\pmod{p}$. Since $0\le j_2\le \mu\omega_1-1$ and $0\le\lambda_1\le\omega_2-1$, then we must have $j_2=0$.

$\bullet$ Case III. $\gamma_2-\gamma_1=1-d_2$, and consequently $\gamma_1=d_2-1$ and $\gamma_2=0$. Then we have $j_2+d_1(d_2-1)+\lambda_1\mu\omega_1 \equiv0\pmod{p}$. Since $0\le j_2\le \mu\omega_1-1$, $0\le\lambda_1\le\omega_2-1$ and $d_1(d_2-1)< \mu\le \mu\omega_1$, then we must have $j_2=\mu\omega_1-d_1(d_2-1)$.
\end{proof}

So we only need to focus on two kinds of special $\Sigma$-servers, $\Sigma B_0$ and $\Sigma B_{\mu\omega_1-d_1(d_2-1)}$.

\begin{lemma} \label{firstspecialserver}
$A_{j_1}\bigcap B_0=\{x_0\}$ if and only if $j_1=0$ or $\mu\omega_2-\mu+d_1(d_2-1)+1\le j_1 \le \mu\omega_2-1$.
\end{lemma}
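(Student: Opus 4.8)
The plan is to prove the characterization in three steps, organized by the value of $j_1$. \emph{(1) Locating the relevant $j_1$.} Since $A_{j_1}\cap B_0=\{x_0\}$ forces $x_0\in A_{j_1}$, I first determine which $j_1$ with $0\le j_1\le\mu\omega_2-1$ satisfy $x_0\in A_{j_1}$. Writing this as $j_1+\alpha+\beta\mu\omega_2\equiv0\pmod p$ with $0\le\alpha\le d_1-1$, $0\le\beta\le\omega_1-1$, and using $\mu\omega_2\mid p$, one gets $\mu\omega_2\mid j_1+\alpha$; since $0\le j_1+\alpha<2\mu\omega_2$ (because $d=d_1\omega_1<p$ gives $d_1<\mu\omega_2$), either $j_1+\alpha=0$ or $j_1+\alpha=\mu\omega_2$. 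Hence $x_0\in A_{j_1}$ iff $j_1=0$ (via $\alpha=\beta=0$) or $j_1=\mu\omega_2-k$ with $1\le k\le d_1-1$ (via $\alpha=k$, $\beta=\omega_1-1$), and the representation of $x_0$ in each such $A_{j_1}$ is unique. A similar, shorter argument using $t>d^2-d$ (which gives $(d_2-1)d_1<\mu$) shows the representation of $x_0$ inside $B_0$ is also unique, namely $\gamma=\lambda=0$; these uniqueness facts fix the indices $\alpha_1,\gamma_1,\lambda_1$ attached to $x_0$ in what follows.

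\emph{(2) The case $j_1=0$.} Here I check directly that $A_0\cap B_0=\{x_0\}$: if $\alpha+\beta\mu\omega_2\equiv\gamma d_1+\lambda\mu\omega_1\pmod p$, reducing modulo $\mu$ gives $\alpha\equiv\gamma d_1$, and since both sides lie in $[0,\mu-1]$ while $\alpha<d_1$ this forces $\alpha=\gamma=0$; dividing the residual relation by $\mu$ and using $\gcd(\omega_1,\omega_2)=1$ then forces $\beta=\lambda=0$. Thus $j_1=0$ lies on the ``good'' side of the claimed characterization.

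\emph{(3) The case $j_1=\mu\omega_2-k$, $1\le k\le d_1-1$.} This is the heart of the proof. Suppose $A_{j_1}\cap B_0$ contains a second element $x_e$ ($e\ne0$). Because $j_2=0$ is different from $\mu\omega_1-d_1(d_2-1)$ (since $0<d_1(d_2-1)<\mu\le\mu\omega_1$), the case analysis in the proof of Lemma~\ref{choiceofj} puts us in Case~II, so, combined with Step~(1), the $A_{j_1}$-representations have $\alpha_1=k$ and the $B_0$-representations have $\gamma_1=0$, $\gamma_2=d_2-1$. Reducing the governing congruence modulo $\mu$ gives $\alpha_2\equiv k+(d_2-1)d_1\pmod\mu$ with $\alpha_2\in[0,d_1-1]$. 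The hypothesis $d^2-d<t<d^2$ yields $(d_2-1)d_1<\mu<d_1d_2$ (in particular $d_2\ge2$ throughout this range, since $d_2=1$ would force $\mu\omega_1=(t+d)/(d+1)$ into the integer-free interval $(d^2/(d+1),\,d)$). Consequently: if $k<\mu-(d_2-1)d_1$, the residue is $k+(d_2-1)d_1\ge(d_2-1)d_1+1>d_1-1$, so no admissible $\alpha_2$ exists and $A_{j_1}\cap B_0=\{x_0\}$; if $k\ge\mu-(d_2-1)d_1$, then $\alpha_2=k+(d_2-1)d_1-\mu$ lies in $[0,d_1-1]$, and substituting it back into the full congruence mod $p$ and dividing by $\mu$ reduces the remaining requirement to $(\beta_2+1)\omega_2-\lambda_2\omega_1\equiv1\pmod{\omega_1\omega_2}$, which is solvable with $\beta_2\in[0,\omega_1-1]$, $\lambda_2\in[0,\omega_2-1]$ by the Chinese Remainder Theorem; the resulting index $e$ lies strictly between $0$ and $p$, so $x_e\ne x_0$ and $|A_{j_1}\cap B_0|\ge2$. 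Hence $A_{\mu\omega_2-k}\cap B_0=\{x_0\}$ exactly when $1\le k\le\mu-(d_2-1)d_1-1$, i.e., substituting $k=\mu\omega_2-j_1$, exactly when $\mu\omega_2-\mu+d_1(d_2-1)+1\le j_1\le\mu\omega_2-1$; together with Step~(2) this is the asserted characterization.

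I expect Step~(3) to be the main obstacle: one must feed the structural restriction from Lemma~\ref{choiceofj} (to pin down $\gamma_2=d_2-1$) into a careful comparison of $k+(d_2-1)d_1$ against $\mu$, and then verify that the only remaining obstruction --- solvability of the residual congruence in the prescribed ranges --- never prevents a second element from appearing. The delicate point is the bookkeeping with the inequalities coming from $d^2-d<t<d^2$, in particular $(d_2-1)d_1<\mu<d_1d_2$ and the consequent fact that the ``good'' interval $[\mu\omega_2-\mu+d_1(d_2-1)+1,\,\mu\omega_2-1]$ is a proper subinterval of $\{\,j_1:x_0\in A_{j_1}\,\}=[\mu\omega_2-d_1+1,\,\mu\omega_2-1]$, which is exactly where one is most likely to slip.
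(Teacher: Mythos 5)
Your proof is correct and follows essentially the same route as the paper's: identify the candidates $j_1$ with $x_0\in A_{j_1}$, invoke Case II of Lemma~\ref{choiceofj} (Case III being excluded since $\mu\omega_1-d_1(d_2-1)\neq 0$) to force $\alpha_2-\alpha_1=d_1(d_2-1)-\mu$, and translate the admissibility of $\alpha_2$ into the stated interval for $j_1$. The only substantive difference is that you also verify the converse --- that for $j_1$ in the excluded range a second common element genuinely exists, via solvability of $(\beta_2+1)\omega_2-\lambda_2\omega_1\equiv 1\pmod{\omega_1\omega_2}$ --- a point the paper's proof leaves implicit by speaking only of ``candidates,'' so your argument is, if anything, slightly more complete.
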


\begin{proof}
$j_1+\alpha_1+\beta_1\mu\omega_2\equiv0 \pmod{p}$ holds if and only if $$j_1=\alpha_1=\beta_1=0$$ or $$\beta_1=\omega_1-1 \text{~and~} j_1=\mu\omega_2-\alpha_1.$$ So the candidate for $j_1$ satisfying $x_0\in A_{j_1}$ is $j_1\in\{0\}\bigcup[\mu\omega_2-d_1+1,\mu\omega_2-1]$.

We should then exclude those $j_1$ such that $|A_{j_1}\bigcap B_0|>1$. Continuing Case II in Lemma \ref{choiceofj}, $\gamma_2-\gamma_1=d_2-1$, then $\alpha_2-\alpha_1=d_1(d_2-1)-\mu$. So $\alpha_1\in[0,d_1-1]\bigcap[\mu-d_1(d_2-1),\mu-d_1(d_2-1)+d_1-1]=[\mu-d_1(d_2-1),d_1-1]$. Then the candidate for $j_1$ such that $|A_{j_1}\bigcap B_0|>1$ is $j_1\in[\mu\omega_2-d_1+1,\mu\omega_2-\mu+d_1(d_2-1)]$. Therefore, by excluding these choices for $j_1$, we finally deduce that $A_{j_1}\bigcap B_0=\{x_0\}$ if and only if $j_1=0$ or $\mu\omega_2-\mu+d_1(d_2-1)+1\le j_1 \le \mu\omega_2-1$.
\end{proof}

\begin{lemma} \label{secondspecialserver}
$A_{j_1}\bigcap B_{\mu\omega_1-d_1(d_2-1)}=\{x_0\}$ if and only if $\mu\omega_2-d_1+1\le j_1 \le \mu\omega_2-d_1d_2+\mu$.
\end{lemma}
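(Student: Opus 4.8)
The plan is to mirror the two-stage argument used for Lemma \ref{firstspecialserver}: first determine all indices $j_1$ with $x_0\in A_{j_1}$ \emph{and} $x_0\in B_{\mu\omega_1-d_1(d_2-1)}$ (the latter being automatic from Lemma \ref{choiceofj}, Case III, which is precisely the case $j_2=\mu\omega_1-d_1(d_2-1)$), then subtract off those $j_1$ for which the intersection is strictly larger than $\{x_0\}$. For the first stage I would reuse the computation already done in Lemma \ref{firstspecialserver}: the congruence $j_1+\alpha_1+\beta_1\mu\omega_2\equiv 0\pmod p$ forces either $j_1=\alpha_1=\beta_1=0$ or $\beta_1=\omega_1-1$ and $j_1=\mu\omega_2-\alpha_1$ with $1\le\alpha_1\le d_1-1$. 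So the candidate set for $j_1$ with $x_0\in A_{j_1}$ is again $\{0\}\cup[\mu\omega_2-d_1+1,\mu\omega_2-1]$.

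For the second stage I would carry out the analogue of the displacement computation in Lemma \ref{firstspecialserver}, but now starting from Case III of Lemma \ref{choiceofj}: there $\gamma_1=d_2-1$, $\gamma_2=0$, and the subtraction of the two defining congruences gives $(\alpha_2-\alpha_1)+(\beta_2-\beta_1)\mu\omega_2\equiv (1-d_2)d_1+(\lambda_2-\lambda_1)\mu\omega_1\pmod p$. Reducing mod $\mu$ and using $0<d_1(d_2-1)<\mu$ pins down $\alpha_2-\alpha_1$ (up to the $\mu\omega_2$ wrap in the first case it will be $\alpha_2-\alpha_1=-d_1(d_2-1)$, forcing $\alpha_2=0$ and $\alpha_1=d_1(d_2-1)$, hence $\alpha_1\le d_1-1$ is impossible unless $d_2=1$; in the complementary case, absorbing a multiple of $\mu\omega_2$, it will be $\alpha_2-\alpha_1=\mu-d_1(d_2-1)$). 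Intersecting the resulting interval for $\alpha_1$ with $[1,d_1-1]$ and translating back via $j_1=\mu\omega_2-\alpha_1$ yields the set of ``bad'' $j_1$ to be removed. Removing this set from $\{0\}\cup[\mu\omega_2-d_1+1,\mu\omega_2-1]$ should leave exactly the interval $[\mu\omega_2-d_1+1,\ \mu\omega_2-d_1d_2+\mu]$ claimed in the statement (note in particular that $j_1=0$ gets excluded here, since $x_0\in A_0$ but $A_0\cap B_{\mu\omega_1-d_1(d_2-1)}$ turns out to contain a second item, in contrast to Lemma \ref{firstspecialserver}).

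The main obstacle I anticipate is purely bookkeeping: tracking which representative of a residue class mod $p$ lies in the admissible range for each of $\alpha_i,\beta_i,\gamma_i,\lambda_i$, and in particular handling the ``wrap-around'' term $\lambda_2-\lambda_1\in\{1-\omega_2,\dots,\omega_2-1\}$ correctly when $\omega_1>1$ so that $j_2=\mu\omega_1-d_1(d_2-1)$ is genuinely the only solution in $[0,\mu\omega_1-1]$. One should also double-check the boundary/degenerate cases $\omega_1=1$, $d_2=1$, and the extremal values $t=d^2-d+1$ and $t=d^2-1$, to confirm that the two intervals in Lemmas \ref{firstspecialserver} and \ref{secondspecialserver} are nonempty and behave as needed for the matching argument sketched in the preceding remark; I expect the inequality $t>d^2-d$ (equivalently $\mu>d_1(d_2-1)$) to be exactly what keeps every arithmetic step consistent.
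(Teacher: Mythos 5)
Your proposal is correct and follows essentially the same route as the paper's proof: the same candidate set $\{0\}\cup[\mu\omega_2-d_1+1,\mu\omega_2-1]$ from $x_0\in A_{j_1}$, the same use of Case III of Lemma \ref{choiceofj} to pin down $\alpha_2-\alpha_1=\mu-d_1(d_2-1)$, and the same interval intersection and translation $j_1=\mu\omega_2-\alpha_1$ to identify and remove the bad indices (including $j_1=0$). The details you defer (range bookkeeping, $d_2\ge 2$ via $\omega_2\neq d+1$) are exactly the ones the paper handles, so no gap remains.
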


\begin{proof}
$j_1+\alpha_1+\beta_1\mu\omega_2\equiv0 \pmod{p}$ holds if and only if $$j_1=\alpha_1=\beta_1=0$$ or $$\beta_1=\omega_1-1 \text{~and~} j_1=\mu\omega_2-\alpha_1.$$ So the candidate for $j_1$ satisfying $x_0\in A_{j_1}$ is $j_1\in\{0\}\bigcup[\mu\omega_2-d_1+1,\mu\omega_2-1]$.

We should then exclude those $j_1$ such that $|A_{j_1}\bigcap B_{\mu\omega_1-d_1(d_2-1)}|>1$. Continuing Case III in Lemma \ref{choiceofj}, $\gamma_2-\gamma_1=1-d_2$, then $\alpha_2-\alpha_1=\mu-d_1(d_2-1)$. So $\alpha_1\in[0,d_1-1]\bigcap[d_1(d_2-1)-\mu,d_1(d_2-1)-\mu+d_1-1]=[0,d_1(d_2-1)-\mu+d_1-1]$. Then the candidate for $j_1$ such that $|A_{j_1}\bigcap B_0|>1$ is $j_1\in[\mu\omega_2-d_1(d_2-1)+\mu-d_1+1,\mu\omega_2-1]\bigcup\{0\}$. Therefore, by excluding these choices for $j_1$, we finally deduce that $A_{j_1}\bigcap B_{\mu\omega_1-d_1(d_2-1)}=\{x_0\}$ if and only if $\mu\omega_2-d_1+1\le j_1 \le \mu\omega_2-d_1d_2+\mu$.
\end{proof}

\begin{lemma} \label{inequality}
$d_2(\mu-d_1(d_2-1))\ge d_1.$
\end{lemma}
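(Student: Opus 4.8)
The plan is to eliminate $\mu$ in favour of the single nonnegative integer $e:=d_1d_2-\mu$, to rewrite the claimed inequality as a bound on $e$, and finally to reduce the whole thing to one elementary concave quadratic inequality in $d_2$.

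First I would record the two numerical facts about $\mu$ that the hypothesis $d^2-d<t<d^2$ provides. Writing $p=t+d$ this says $d^2<p<d^2+d=d(d+1)$. Since $p=\mu\omega_1\omega_2$ and $d(d+1)=(\omega_1d_1)(\omega_2d_2)=\omega_1\omega_2\,d_1d_2$, the bound $p<d(d+1)$ gives $\mu<d_1d_2$, so $e:=d_1d_2-\mu\ge 1$ and $e\,\omega_1\omega_2=d(d+1)-p$; the bound $p\ge d^2+1$ then yields $e\,\omega_1\omega_2\le d-1$. Next I would rewrite the target: substituting $\mu=d_1d_2-e$ turns $d_2(\mu-d_1(d_2-1))\ge d_1$ into the equivalent statement $d_2e\le d_1(d_2-1)$. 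Multiplying through by $\omega_1\omega_2$ and using $d_1\omega_1=d$ and $d_2\omega_2=d+1$, the right-hand side becomes $d_1(d_2-1)\omega_1\omega_2=d(d+1-\omega_2)$; combined with $e\,\omega_1\omega_2\le d-1$ this reduces the task to proving $(d-1)d_2\le d(d+1-\omega_2)$, and multiplying once more by $d_2$ (using $\omega_2d_2=d+1$ again) this is exactly
\[(d-1)d_2^2\le d(d+1)(d_2-1).\]

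Finally I would dispatch this quadratic inequality directly. The range $d^2-d<t<d^2$ is nonempty only for $d\ge 2$, so the function $f(x)=d(d+1)(x-1)-(d-1)x^2$ has negative leading coefficient and is concave; a one-line computation gives $f(2)=d^2-3d+4>0$ and $f(d+1)=(d+1)\big(d^2-(d-1)(d+1)\big)=d+1>0$, hence $f(x)>0$ for every $x\in[2,d+1]$. Since $\omega_2\ge 1$ forces $d_2\le d+1$, all that remains is to verify $d_2\ge 2$, and this edge case is the one point requiring genuine attention (note $f(1)=-(d-1)<0$, so $d_2=1$ really would break the chain). But if $d_2=1$ then $\omega_2=d+1$ and $\mu=d_1-e\le d_1-1$, so $p=\mu\omega_1\omega_2\le(d_1-1)\omega_1(d+1)=(d-\omega_1)(d+1)\le(d-1)(d+1)=d^2-1$, contradicting $p>d^2$; alternatively $d_2=1$ is already excluded by $\mu>d_1(d_2-1)$ from the proof of Lemma~\ref{choiceofj}. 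With $d_2\ge 2$ established, $f(d_2)>0$ yields the displayed quadratic bound, and running the reduction backwards gives $d_2e\le d_1(d_2-1)$, which is precisely the assertion of the lemma. The main obstacle is thus not the algebra, which is routine, but spotting the correct reduction to a single-variable inequality and isolating the $d_2=1$ boundary case.
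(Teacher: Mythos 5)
Your proof is correct and, despite the different bookkeeping (introducing $e=d_1d_2-\mu$ and finishing with a concave quadratic in $d_2$), it follows essentially the same route as the paper: after clearing denominators your inequality $d_2e\,\omega_1\omega_2\le d(d+1-\omega_2)$ is literally the paper's $d\omega_2(d+1-\omega_2)\ge(d+1)(d^2+d-p)$, and both arguments rest on the same two facts, namely $d^2+d-p\le d-1$ and the exclusion of $d_2=1$ (equivalently $\omega_2=d+1$). One caveat: your ``alternative'' justification for $d_2\ge2$ via $\mu>d_1(d_2-1)$ from Lemma~\ref{choiceofj} is vacuous when $d_2=1$ (its right-hand side is then $0$), so only your first argument, $p\le(d-\omega_1)(d+1)\le d^2-1$ contradicting $p>d^2$, actually rules out that case.
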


\begin{proof}
This is equivalent to $\frac{d+1}{\omega_2}(\frac{p}{\omega_1\omega_2}-\frac{d}{\omega_1}(\frac{d+1}{\omega_2}-1))\ge\frac{d}{\omega_1}$. Reorganizing this inequality we get $d\omega_2(d+1-\omega_2)\ge(d+1)(d^2+d-p)$. Since $\omega_2|d+1$ and $d^2<p<d^2+d$, so $p$ is not a multiple of $d+1$ and therefore $\omega_2\neq d+1$. Thus we have $1\le \omega_2 \le \frac{d+1}{2}$, then the left-hand-side is at least $d^2$. Since $d^2<p$, we have $d^2+d-p\le d-1$ and the right-hand-side is at most $d^2-1$. Therefore the inequality holds.
\end{proof}

Combining these lemmas above, we can finally extend Theorem \ref{first} to the range of parameters $d^2-d<t<d^2$.

\begin{theorem}
For $d^2-d< t< d^2$, there exist $k$-PIR array codes with $m=\mu(2d+1)$ servers such that the rate $\frac{k}{m}$ equals $g(s,t)=1-\frac{d^2+d}{p(2d+1)}$.
\end{theorem}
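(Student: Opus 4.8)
The plan is to reuse, essentially verbatim, the construction and the counting from the proof of Theorem~\ref{first}, and only to replace the ``complete bipartite graph'' step of that proof by a matching argument, exactly as the remark following Theorem~\ref{first} suggests. By the symmetry of the construction it suffices to span the single item $x_0$. The $\tfrac{td_2}{\omega_1}$ singleton servers carrying a singleton cell $x_0$, together with the $\tfrac{(t-1)d_1}{\omega_2}$ $\Sigma$-servers carrying a singleton cell $x_0$, span $x_0$ on their own; there remain the $d_1d_2$ copies of the singleton servers $\overline{A_{j_1}}$ with $x_0\in A_{j_1}$ and the $d_1d_2$ copies of the $\Sigma$-servers $\Sigma B_{j_2}$ with $x_0\in B_{j_2}$, and a pair $(\overline{A_{j_1}},\Sigma B_{j_2})$ jointly spans $x_0$ precisely when $A_{j_1}\cap B_{j_2}=\{x_0\}$. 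Hence, if the bipartite multigraph $G$ whose two sides are these two families of leftover servers, with an edge whenever $A_{j_1}\cap B_{j_2}=\{x_0\}$, has a perfect matching, then $x_0$ is spanned $k=\tfrac{td_2}{\omega_1}+\tfrac{(t-1)d_1}{\omega_2}+d_1d_2=\tfrac{d^2+2td+t}{\omega_1\omega_2}$ times, whence $k/m=1-\tfrac{d^2+d}{p(2d+1)}$, the optimal rate.

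So the whole task reduces to producing a perfect matching of $G$. By Lemma~\ref{choiceofj}, every $\Sigma$-server on the right side of $G$ other than $\Sigma B_0$ and $\Sigma B_{\mu\omega_1-d_1(d_2-1)}$ is adjacent to all $d_1$ singleton-server types. I would therefore first match the $2d_1$ copies of these two ``deficient'' $\Sigma$-servers into the singleton servers, where each singleton type, having multiplicity $d_2$, may act as a partner at most $d_2$ times, and then extend the matching over the rest of $G$, which is a balanced complete bipartite multigraph and hence trivially matchable (and empty when $d_2=2$). By the defect form of Hall's theorem, the first step is possible if and only if $d_1|S|\le d_2|N(S)|$ for every $S\subseteq\{\Sigma B_0,\Sigma B_{\mu\omega_1-d_1(d_2-1)}\}$.

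For $|S|\le1$ this is exactly Lemma~\ref{inequality}: by Lemmas~\ref{firstspecialserver} and~\ref{secondspecialserver} each deficient server has a singleton-server neighbourhood of size $q:=\mu-d_1(d_2-1)$, and Lemma~\ref{inequality} gives $d_2q\ge d_1$. For $S=\{\Sigma B_0,\Sigma B_{\mu\omega_1-d_1(d_2-1)}\}$ I would rewrite those two neighbourhoods, by reducing the index ranges of Lemmas~\ref{firstspecialserver} and~\ref{secondspecialserver}, as $I_1=\{0\}\cup[\mu\omega_2-q+1,\mu\omega_2-1]$ and $I_2=[\mu\omega_2-d_1+1,\mu\omega_2-d_1+q]$. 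Since the hypothesis $t<d^2$ forces $\mu<d_1d_2$, hence $q\le d_1-1$, both of these index intervals lie inside $[\mu\omega_2-d_1+1,\mu\omega_2-1]$, so $|I_1\cup I_2|\le d_1$. If the two intervals meet, which by Lemma~\ref{inequality} always happens when $d_2=2$ since then $2q\ge d_1$, then $I_1\cup I_2=\{0\}\cup[\mu\omega_2-d_1+1,\mu\omega_2-1]$ has exactly $d_1$ elements, and $d_2\ge2$, which holds because $d^2<p<d^2+d$ makes $d+1\nmid p$, hence $\omega_2<d+1$ and $\omega_2\le\tfrac{d+1}{2}$, gives $d_2|I_1\cup I_2|\ge2d_1$; if the two intervals are disjoint, then $|I_1\cup I_2|=2q$ and Lemma~\ref{inequality} gives $d_2|I_1\cup I_2|=2d_2q\ge2d_1$. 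In either case Hall's condition holds, the matching exists, and the theorem follows.

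The step I expect to be the real difficulty is the two-element case $S=\{\Sigma B_0,\Sigma B_{\mu\omega_1-d_1(d_2-1)}\}$ of Hall's condition, because the two deficient $\Sigma$-servers may share a large common neighbourhood, so that $|N(S)|$ can be far smaller than $|I_1|+|I_2|$. The resolution rests on the observation used above that $\mu<d_1d_2$ confines the common part of $I_1$ and $I_2$ to an interval of length $d_1-1$ together with the lone index $0$, so that even a worst-case overlap still leaves enough room once $d_2\ge2$; combined with the slack already present in Lemma~\ref{inequality}, this settles the case. Everything else, namely the reduction to a matching, the complete-bipartite extension, and the arithmetic giving $k=\tfrac{d^2+2td+t}{\omega_1\omega_2}$ and the optimal rate, is routine given the lemmas already established in this subsection.
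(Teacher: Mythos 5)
Your proof is correct and follows the paper's own argument essentially verbatim: reduce to a perfect matching in the bipartite graph of leftover servers, use Lemmas~\ref{choiceofj}--\ref{inequality} to isolate the two deficient $\Sigma$-server types $\Sigma B_0$ and $\Sigma B_{\mu\omega_1-d_1(d_2-1)}$ together with their neighbourhoods, and verify Hall's condition for the three relevant subsets before completing the matching on the remaining complete bipartite part. Your two-case treatment of $|\mathcal{S}\cup\mathcal{T}|$ (covering versus disjoint intervals) is in fact slightly more careful than the paper's bare assertion that this union has exactly $d_1$ elements, and it reaches the same bound $d_2|\mathcal{S}\cup\mathcal{T}|\ge 2d_1$ either way.
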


\begin{proof}
It suffices to find a perfect matching in the bipartite graph induced by those servers not containing the singleton $x_0$. For those $d_1$ servers named $\Sigma B_0$, from Lemma \ref{firstspecialserver} we know that each $\Sigma B_0$ is connected to the singleton server $\overline{A_j}$ with $j\in\mathcal{S}\triangleq[\mu\omega_2-\mu+d_1(d_2-1)+1, \mu\omega_2-1]\bigcup\{0\}$. Since each $A_j$ appears $d_2$ times so there are totally $d_2(\mu-d_1(d_2-1))$ such servers.

Similarly, for those $d_1$ servers named $\Sigma B_{\mu\omega_1-d_1(d_2-1)}$, from Lemma \ref{secondspecialserver} we know that each $\Sigma B_{\mu\omega_1-d_1(d_2-1)}$ is connected to the singleton server $\overline{A_j}$ with $j\in\mathcal{T}\triangleq[\mu\omega_2-d_1+1,\mu\omega_2-d_1d_2+\mu]$. Since each $A_j$ appears $d_2$ times so there are totally $d_2(\mu-d_1(d_2-1))$ such servers.

For any $\Sigma$-server other than $\Sigma B_0$ and $\Sigma B_{\mu\omega_1-d_1(d_2-1)}$, Lemma \ref{choiceofj} tells us that it is connected to all the singleton servers not containing the singleton $x_0$. Therefore, in order to find a perfect matching, we only need to find the edges incident with the servers $\Sigma B_0$ and $\Sigma B_{\mu\omega_1-d_1(d_2-1)}$, and the rest edges can be chosen arbitrarily. By Lemma \ref{inequality}, $d_2|\mathcal{S}|=d_2|\mathcal{T}|=d_2(\mu-d_1(d_2-1))\ge d_1$. Moreover, $d_2|\mathcal{S}\bigcup\mathcal{T}|=d_1d_2\ge2d_1$, where $d_2\ge2$ follows from the fact that $\omega_2\neq d+1$ shown in Lemma \ref{inequality}. Therefore finding the partners for those servers named $\Sigma B_0$ and $\Sigma B_{\mu\omega_1-d_1(d_2-1)}$ could be done and the result follows.
\end{proof}

\section{$s>2$: a new upper bound of $g(s,t)$} \label{newupperbound}

In this section we derive a new upper bound of $g(s,t)$ for $s>2$ (equivalently, $d>t$ and $p=d+t>2t$), improving the original upper bound shown in Theorem \ref{upperbound}.

For any given PIR array code, we first divide the servers into the following four parts. The first part contains all the $l$ singleton servers, i.e., servers whose cells are all singleton entries. The second part contains all the $r$ servers, where each server has $t-1$ singleton entries and the remaining entry is a summation of $\eta$ out of the left $p-t+1$ items, $2\le \eta\le t+1$. The third part contains all the $u$ servers, where each server has $t-1$ singleton entries and the remaining entry is a summation of $\lambda$ out of the left $p-t+1$ items, $t+1<\lambda\le p-t+1$. Finally the fourth part contains all the $w$ servers, where each server has at most $t-2$ singleton entries. Clearly $l+r+u+w=m$.

\begin{theorem} \label{mainupperbound}
For any integer $t\ge2$ and any positive integer $d>t$, we have
$$g(1+\frac{d}{t},t)\le\frac{d^2+2t^2+3td+2t}{2(t+d)(d+t+1)}.$$
\end{theorem}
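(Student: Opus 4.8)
The plan is to bound $k$ by a global double count. Since the code has the $k$-PIR property, $pk$ counts the recovery sets used over all items --- one batch of $k$ pairwise disjoint ones per item $x_i$ --- so I would classify and count these across all $p$ items. The two simplicity assumptions give two facts: a recovery set of size $1$ for $x_i$ is exactly a column in which $x_i$ is a singleton cell; and any inclusion-minimal recovery set of size $\ge 2$ for $x_i$ contains no column storing $x_i$ and must use at least one non-singleton (sum) cell in its spanning combination. Taking the $k$ disjoint recovery sets for each $x_i$ to be minimal, I split them into $a_i$ sets of size $1$, $P_i$ sets of size $2$ using exactly one non-singleton column (hence one singleton helper column), $Q_i$ sets of size $2$ using two non-singleton columns, and $R_i$ sets of size $\ge 3$, so that $k=a_i+P_i+Q_i+R_i$ for every $i$.

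The decisive observation --- and the reason the classification singles out the threshold $t+1$ separating the $r$ and $u$ servers --- is that in a $P_i$-set the non-singleton column's sum cell has support $T$ with $i\in T$, and since $T$ is disjoint from that column's own $t-1$ singletons, the remaining $|T|-1$ items of $T$ must be cancelled using the $t$ singleton cells of the helper column, forcing $|T|\le t+1$; hence the non-singleton column of a $P_i$-set is a type-$r$ server, while a type-$u$ server (sum support $>t+1$) can appear only in a $Q_i$-set or an $R_i$-set. I would then record, for each $i$, the column-count inequalities $a_i^{\mathrm{sing}}+P_i\le l$ (singleton columns used, writing $a_i=a_i^{\mathrm{sing}}+a_i^{\mathrm{ns}}$) and $a_i^{\mathrm{ns}}+P_i+2Q_i+R_i\le m-l$ (non-singleton columns used), together with the aggregate bounds $\sum_i a_i^{\mathrm{sing}}\le tl$ and $\sum_i a_i^{\mathrm{ns}}\le (t-1)(r+u)+(t-2)w$ (that many singleton cells of the two kinds exist), and --- the ingredient absent from Theorem~\ref{upperbound} --- the bound $\sum_i P_i\le r(t+1)$, which holds because a fixed type-$r$ server can be the non-singleton column of a $P_i$-set only for the indices $i$ lying in its sum's support, that is, for at most $\eta\le t+1$ values of $i$.

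Finally, granting that an optimal code uses no recovery set of size $\ge 3$ (a parallel column count shows such sets are never advantageous, consuming at least three columns per recovered item) and --- for a similar reason --- has $u=w=0$, I would eliminate $Q_i$ through $Q_i\le\frac12(m-l-a_i^{\mathrm{ns}}-P_i)$, sum $k\le a_i^{\mathrm{sing}}+\frac12(m-l+a_i^{\mathrm{ns}}+P_i)$ over $i$, and substitute the aggregates (now with $m-l=r$), using also $\sum_i P_i\le pl-\sum_i a_i^{\mathrm{sing}}$ from $\sum_i(a_i^{\mathrm{sing}}+P_i)\le pl$, which together with $\sum_i a_i^{\mathrm{sing}}\le tl$ forces $\sum_i P_i\le ld$ once $\sum_i a_i^{\mathrm{sing}}$ is at its maximum. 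This turns the problem into a small linear program in $r$ and the aggregates $\sum_i a_i^{\mathrm{sing}},\sum_i P_i$; a short case split according to whether $r(t+1)\le ld$ yields, in either case, $k\le m\bigl(\frac{t}{t+d}+\frac{d}{2(t+d+1)}\bigr)=\frac{d^2+2t^2+3td+2t}{2(t+d)(d+t+1)}\,m$, with the extremum at $r=\frac{md}{t+d+1}$. The part requiring the most care is the combinatorial bookkeeping behind the $a_i,P_i,Q_i,R_i$ inequalities --- above all a clean proof of $\sum_i P_i\le r(t+1)$ and of the support bound $|T|\le t+1$ for $P_i$-sets, together with the verification that admitting $u,w>0$ or size-$\ge 3$ recovery sets cannot beat the $u=w=0$, size-$\le 2$ configuration; once those are settled, the linear-programming optimization is routine.
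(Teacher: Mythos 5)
Your proposal follows the paper's route closely: the same four-way classification of servers keyed on the support size of the sum cell (with the threshold $t+1$ separating your type-$r$ and type-$u$ servers), the same decisive new ingredient $\sum_i P_i\le r(t+1)$ coming from the observation that a singleton helper column can cancel only $t$ items, and the same concluding linear program with optimizer $r=md/(t+d+1)$ and value $\frac{d^2+2t^2+3td+2t}{2(t+d)(d+t+1)}m$. The one genuine gap is the step where you write ``granting that an optimal code uses no recovery set of size $\ge 3$ \dots and has $u=w=0$.'' This is not a normalization you are entitled to: the theorem bounds $g(1+\frac{d}{t},t)$, a supremum over \emph{all} codes, so the counting must go through for a code that does contain $u$- and $w$-servers and does use recovery sets of size $\ge 3$. ``Such sets are never advantageous'' is a heuristic about how one would design a code, not a reduction valid for an adversarially chosen one; and the column-count heuristic you invoke is delicate, because a size-$3$ recovery set may consume two singleton columns and only one non-singleton column, so it is charged against a different budget than your $Q_i$ term and cannot simply be absorbed into the factor $\tfrac12$.

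The paper closes exactly this gap by never discarding anything. It introduces a separate count $g_i$ for recovery sets with at least two singleton servers and exactly one non-singleton server, controls it together with $f_i$ (your $P_i$) via $f_i+2g_i\le l-l_i$, charges every remaining set two non-singleton servers, and thereby obtains two inequalities $\sum_i k_i\le F(l,r,u,w)$ and $\sum_i k_i\le G(l,r,u,w)$ that hold for \emph{every} code. Only then does it maximize $\min\{F,G\}$ over all $(l,r,u,w)$ with $l+r+u+w=m$ and find that the maximum is attained at $u=w=0$ --- a statement about the bound, not an assumption about the code. Your skeleton becomes a complete proof once you carry the $u$, $w$, and size-$\ge 3$ terms through the count in this fashion; as written, the reduction you defer to ``verification'' is precisely the part of the argument that does the work.
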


\begin{proof}
Suppose we have a $[t\times m,p]$ $k$-PIR array code with parameters satisfying the condition of the theorem. For each $i\in\{1,2,\dots,p\}$, let $S^i_{1},S^i_{2},\dots,S^i_{k_i}$ be disjoint subsets of servers such that each subset of servers could span the item $x_i$. The number $k_i$ is chosen to be as large as possible. To derive an upper bound on $k/m$, it suffices to show that
$$\sum_{i=1}^p k_i\le \frac{d^2+2t^2+3td+2t}{2(d+t+1)}m.$$

Among each of the four parts, without loss of generality, any server containing a singleton entry $x_i$ can be chosen as one of the subsets $S^i_{1},S^i_{2},\dots,S^i_{k_i}$. Assume the numbers of such servers among the four parts are $l_i$, $r_i$, $u_i$ and $w_i$. Let $f_i$ be the number of subsets in the list $S^i_{1},S^i_{2},\dots,S^i_{k_i}$ consisting of exactly one singleton server and one non-singleton server. Let $g_i$ be the number of subsets in the list $S^i_{1},S^i_{2},\dots,S^i_{k_i}$ containing at least two singleton servers and exactly one non-singleton server. For any remaining subset in the list $S^i_{1},S^i_{2},\dots,S^i_{k_i}$ other than those listed above, it must contain at least two non-singleton servers. So we have the following inequality:
$$k_i\le l_i+r_i+u_i+w_i+f_i+g_i+\frac{r-r_i+u-u_i+w-w_i-f_i-g_i}{2}.$$

Below we estimate $\sum k_i$ in two ways. First, by counting the singleton servers we have $f_i+2g_i\le l-l_i$. So we have
\begin{align*}
k_i&\le l_i+r_i+u_i+w_i+f_i+g_i+\frac{r-r_i+u-u_i+w-w_i-f_i-g_i}{2}\\
&= l_i+\frac{r+r_i+u+u_i+w+w_i}{2}+\frac{f_i}{2}+\frac{g_i}{2}\\
&\le l_i+\frac{r+r_i+u+u_i+w+w_i}{2}+\frac{f_i}{2}+g_i\\
&\le l_i+\frac{r+r_i+u+u_i+w+w_i}{2}+\frac{l-l_i}{2}\\
&=\frac{l+l_i+r+r_i+u+u_i+w+w_i}{2}.
\end{align*}

By counting the number of singleton cells in each of the four parts of servers, we have $\sum l_i=lt$, $\sum r_i=r(t-1)$, $\sum u_i=u(t-1)$ and $\sum w_i\le w(t-2)$. These lead to
\begin{align}
\sum k_i &\le \frac{p(l+r+u+w)}{2}+\frac{lt+r(t-1)+u(t-1)+w(t-2)}{2} \notag \\
&= l\frac{p+t}{2}+r\frac{p+t-1}{2}+u\frac{p+t-1}{2}+w\frac{p+t-2}{2}.
\end{align}

The second estimation is to analyze $f_i$, the number of subsets in the list $S^i_{1},S^i_{2},\dots,S^i_{k_i}$ containing exactly one singleton server and one non-singleton server. Notice that the non-singleton server cannot be from the third part. This is because such a server from the third part does not have the singleton entry $x_i$ and its unique non-singleton cell should be of the form $x_i+\sum_{j=1}^{\lambda-1} y_j$, where $\lambda-1>t$. Any singleton server without the singleton entry $x_i$ could only provide the values of $t$ items. So they two cannot cooperate on spanning $x_i$. Therefore, trivially we have $f_i\le r-r_i+w-w_i$ and $\sum f_i\le pr-r(t-1)+pw-\sum w_i$.

However, this is still not enough. The following observation will be the key to this theorem. For any non-singleton server from the second part, its unique non-singleton entry is a summation of at most $t+1$ items. So its contribution to counting $\sum f_i$ is at most $t+1$. Therefore, instead of using $\sum f_i\le pr-r(t-1)+pw-\sum w_i$, a better estimation is  $\sum f_i\le r(t+1)+pw-\sum w_i$. Then we have

\begin{align*}
k_i&\le l_i+r_i+u_i+w_i+f_i+g_i+\frac{r-r_i+u-u_i+w-w_i-f_i-g_i}{2}\\
&= l_i+\frac{r+r_i+u+u_i+w+w_i}{2}+\frac{f_i}{4}+\frac{g_i}{2}+\frac{f_i}{4}\\
&\le l_i+\frac{r+r_i+u+u_i+w+w_i}{2}+\frac{l-l_i}{4}+\frac{f_i}{4}\\
&=\frac{l+3l_i+2r+2r_i+2u+2u_i+2w+2w_i}{4}+\frac{f_i}{4},
\end{align*}
and thus

\begin{align}
\sum k_i &\le \sum\frac{l+3l_i+2r+2r_i+2u+2u_i+2w+2w_i}{4}+\frac{\sum f_i}{4}  \notag\\
&=\frac{lp+3lt+2rp+2r(t-1)+2up+2u(t-1)+2wp+2\sum w_i}{4}+\frac{r(t+1)+wp-\sum w_i}{4} \notag\\
&= \frac{lp+3lt+2rp+2r(t-1)+2up+2u(t-1)+2wp+r(t+1)+wp}{4}+\frac{\sum w_i}{4} \notag \\
&\le l\frac{p+3t}{4}+r\frac{2p+3t-1}{4}+u\frac{p+t-1}{2}+w\frac{3p+t-2}{4}.
\end{align}

Now we have estimated $\sum k_i$ in two ways, the formula $(1)$ and the formula $(2)$. Denote $$F(l,r,u,w)=l\frac{p+t}{2}+r\frac{p+t-1}{2}+u\frac{p+t-1}{2}+w\frac{p+t-2}{2},$$ and denote $$G(l,r,u,w)=l\frac{p+3t}{4}+r\frac{2p+3t-1}{4}+u\frac{p+t-1}{2}+w\frac{3p+t-2}{4},$$ then $\sum k_i \le \min \{F,G\}$. To find an upper bound of $\sum k_i$, it suffices to determine the maximum value of $\min \{F,G\}$. Suppose that this maximum occurs at $(\widetilde{l},\widetilde{r},\widetilde{u},\widetilde{w})$. It is easy to check that $F(\widetilde{l},\widetilde{r},\widetilde{u},\widetilde{w}) \le F(\widetilde{l},\widetilde{r}+\widetilde{u},0,\widetilde{w})$ and $G(\widetilde{l},\widetilde{r},\widetilde{u},\widetilde{w}) \le G(\widetilde{l},\widetilde{r}+\widetilde{u},0,\widetilde{w})$. Therefore we have $\widetilde{u}=0$. Now the problem reduces to
\begin{align*}
\max ~~~~& \min\Bigg\{l\frac{p+t}{2}+r\frac{p+t-1}{2}+w\frac{p+t-2}{2},~l\frac{p+3t}{4}+r\frac{2p+3t-1}{4}+w\frac{3p+t-2}{4} \Bigg\}, \\
s.t. ~~~~& l+r+w=m,~l,r,w\in\mathbb{N} \\
\end{align*}

To solve the program above, first let $w$ be fixed. Then it is routine to deduce that when $l=m\frac{t+1}{p+1}+w\frac{p-2t+1}{p+1}$, the corresponding objective function is then a function of $w$ of the form $m\frac{p^2+tp+2t}{2(p+1)}-w\frac{t}{(p+1)}$. Then to maximize this function we will have $w=0$. To sum up, when $w=0$, $l=m\frac{t+1}{p+1}$ and $r=m\frac{p-t}{p+1}$, the optimal value of the program is then $\frac{d^2+2t^2+3td+2t}{2(d+t+1)}m$ and the theorem follows.
\end{proof}

\smallskip
Finally, it is straightforward to check that our new upper bound is better than Theorem \ref{upperbound} when $s>2$. That is, $\frac{d^2+2t^2+3td+2t}{2(t+d)(d+t+1)}<\frac{(2d+1)t+d^2}{(t+d)(2d+1)}$ when $d>t$.

\section{$s>2$: analyzing the construction by Blackburn and Etzion} \label{constru}

For the case $s>2$, Blackburn and Etzion propose a new PIR array code in \cite[Section 4]{blackburn}. We briefly review their construction (hereafter we call it the B-E Construction) for $s$ being an integer. The case when $s$ is not an integer can be managed similarly.

For a server with $t-1$ singleton cells and the other one containing a summation of $j$ out of the left $st-t+1$ items, we shall call it a server of type $j$, $1\le j \le st-t+1$. A type $1$ server is then just a singleton server. For $1\le r \le s$, let $T_r$ denote the whole set of servers of type $(r-1)t+1$ containing all the possible combinations of singleton cells and the summation cell. That is, $|T_1|={st\choose t}$ and $|T_r|={st\choose t-1}{{st-t+1}\choose{(r-1)t+1}}$ for $2\le r \le s$. The B-E construction consists of servers $T_r$, $1\le r \le s$, with each $T_r$ appearing $\eta_r$ times. For any given item $x_i$, we pair those servers not containing the singleton cell $x_i$ by constructing $s-1$ bipartite graphs. The choices for $\eta_r$ are to guarantee that, for any given item $x_i$, each of the $s-1$ bipartite graphs has a perfect matching, i.e., all those servers not containing the singleton $x_i$ can be divided into pairs, each pair capable of spanning $x_i$.

For any given item $x_i$, the $s-1$ bipartite graphs are as follows. The bipartite graph $G_r$, $1\le r \le s-1$, has two sides. The first side represents all the servers in $T_r$ (appearing $\eta_r$ times), in which $x_i$ is neither a singleton nor appears in the summation part. The second side represents all the servers in $T_{r+1}$ (appearing $\eta_{r+1}$ times), in which $x_i$ appears in the summation part. An edge is connected between $v$ from the first side and $u$ from the second side, if and only if all the items appeared in $v$ ($t-1$ singletons and $(r-1)t+1$ items in the summation part) are exactly the $rt$ items in the summation part of $u$, excluding $x_i$. It can be easily calculated that, to guarantee a perfect matching in $G_r$, we must have the ratio $\eta_1:\eta_2={{p-t-1}\choose{t-1}}:1$ and $\eta_r:\eta_{r+1}={{p-rt-1}\choose{t-1}}:{{rt}\choose{t-1}}$ for $2\le r \le s-1$.

\begin{example}[$s=4$] \label{example}
The servers in $T_1,T_2,T_3,T_4$ appear $\eta_1,\eta_2,\eta_3,\eta_4$ times respectively. The following ratios are required. $\eta_1:\eta_2={{3t-1}\choose{t-1}}:1$, $\eta_2:\eta_3=(t+1):2t$ and $\eta_3:\eta_4=1:{{3t}\choose{t-1}}$. So we may select $\eta_1=(t+1){{3t-1}\choose{t-1}}$, $\eta_2=t+1$, $\eta_3=2t$ and $\eta_4=2t{{3t}\choose{t-1}}$.
\end{example}

Compared to those existing PIR array codes in Theorem \ref{severalconstr}, one advantage of the B-E Construction is that it is a unified construction, suitable to all parameters.

In this section we shall deeply analyze the B-E Construction in several aspects. The rest of this section is divided into three subsections. In the first subsection, based on a slight modification, we present a new construction and compare it with the original B-E Construction. In the second subsection we show that both constructions produce PIR array codes with better rate than all the other existing PIR array codes in Theorem \ref{severalconstr}. Finally in the third subsection, we add some remarks regarding the possible optimality of the B-E Construction.

\subsection{A new construction}

A minor problem of the B-E Construction is that, the number of servers $m$ could be very large since the choices for the integers $\{\eta_r:1\le r \le s\}$ should abide by the desired ratio. Based on a slight modification, we present a new construction of PIR array codes, with a much smaller number of servers compared to the B-E Construction, with a slight sacrifice in the rate.\\

\noindent\fbox{%
  \parbox{\textwidth}{%
Construction (given $t$, $s>2$ and $p=ts$):\\

1. Take all those ${p\choose t}$ singleton servers, each appearing $\delta$ times, where $\delta={{p-t-1}\choose{t-1}}$.\\

2. For a server with $t-1$ singleton cells and the other one containing a summation of $j$ out of the left $p-t+1$ items, we shall call it a server of type $j$, $2\le j \le p-t+1$. Take all those servers of types $t+1, t+2,\dots,p-t+1$, each appearing exactly once.
  }%
}

\medskip

It is easy to see that the array code consisting of these servers is indeed symmetric for all the items $\{x_1,x_2,\dots,x_p\}$. To span any symbol, say $x_i$, all those servers containing a singleton cell $x_i$ will surely do, and we hope that those servers not containing a singleton cell $x_i$ can be divided into pairs and each pair could together span $x_i$. This is shown in the following theorem.

\begin{theorem} \label{main}
The construction above gives an array code with $m={p\choose t}{{p-t-1}\choose{t-1}}+{p\choose {t-1}} \sum_{t+1\le j \le p-t+1} {{p-t+1}\choose j}$ and $k=\frac{p+t}{2p}{p\choose t}{{p-t-1}\choose{t-1}}+\frac{p+t-1}{2p}{p\choose {t-1}} \sum_{t+1\le j \le p-t+1} {{p-t+1}\choose j}$.
\end{theorem}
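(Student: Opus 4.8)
The plan is to first verify that the array code is well-defined (the servers of types $t+1,\dots,p-t+1$ all exist, which requires $p-t+1 \geq t+1$, i.e.\ $s>2$, and requires $t-1$ singleton cells which is fine for $t\geq 2$), then to establish the $k$-PIR property by exhibiting, for a fixed item $x_i$, a pairing of all servers that do not contain $x_i$ as a singleton cell. By the symmetry of the construction under permuting $\{x_1,\dots,x_p\}$, it suffices to do this for one fixed $i$. The servers containing $x_i$ as a singleton cell each span $x_i$ by themselves; counting these gives the ``diagonal'' contribution to $k$. Then I would count the servers \emph{not} containing $x_i$ as a singleton and organize them so that each pair spans $x_i$: pair a singleton server whose $t$ cells are disjoint from $\{x_i\}$ with a server of some type $j \geq t+1$ whose summation cell is $x_i + \sum_{\ell} y_\ell$, in such a way that the $t$ singletons of the first server are exactly $\{y_\ell\}$ together with one more item, and the $t-1$ singletons of the second server supply the rest — so that together they recover $x_i$. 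The key design choice is that a singleton server appears $\delta = \binom{p-t-1}{t-1}$ times, which is exactly the number of ways to complete a fixed $(t-1)$-subset (the singletons of a type-$j$ server) plus a fixed summation support into the cell-structure needed, ensuring the relevant bipartite graph between singleton servers and higher-type servers is regular on each side and hence has a perfect matching by Hall's theorem (or by direct counting/double counting, since regularity of a bipartite graph guarantees a perfect matching).

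Concretely, the main steps are: (i) count $k_i^{(1)}$, the number of singleton servers containing $x_i$: there are $\binom{p}{t}$ singleton servers, each appearing $\delta$ times, and by symmetry a fraction $t/p$ of them contain $x_i$, giving $k_i^{(1)} = \frac{t}{p}\binom{p}{t}\delta$; (ii) count $k_i^{(2)}$, the number of servers of types $t+1,\dots,p-t+1$ containing $x_i$ as a singleton: there are $\binom{p}{t-1}\sum_{t+1\le j\le p-t+1}\binom{p-t+1}{j}$ such servers in total, of which a fraction $(t-1)/p$ contain $x_i$ as a singleton, giving $k_i^{(2)} = \frac{t-1}{p}\binom{p}{t-1}\sum_j\binom{p-t+1}{j}$; (iii) show the remaining servers (the $\frac{p-t}{p}$-fraction of singleton servers missing $x_i$, together with the $\frac{p-t+1}{p}$-fraction of higher-type servers whose summation cell contains $x_i$) can be perfectly matched into spanning pairs; (iv) observe that the unused higher-type servers — those not containing $x_i$ at all, as a singleton or in the summation — cannot help span $x_i$ and are simply discarded, but one must check the matching in (iii) leaves no singleton server unpaired. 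For (iii), the count of singleton servers missing $x_i$ is $\frac{p-t}{p}\binom{p}{t}\delta = \binom{p-1}{t}\delta$, and the count of higher-type servers with $x_i$ in the summation is $\frac{p-t+1}{p}\binom{p}{t-1}\sum_j\binom{p-t+1}{j} = \binom{p-1}{t-1}\sum_j\binom{p-t+1}{j}$; one verifies these two counts are equal (this is where the choice $\delta=\binom{p-t-1}{t-1}$ enters: $\binom{p-1}{t}\binom{p-t-1}{t-1}$ must equal $\binom{p-1}{t-1}\sum_{t+1\le j\le p-t+1}\binom{p-t+1}{j}$, which should follow from a Vandermonde-type identity after recognizing $\sum_{t+1\le j\le p-t+1}\binom{p-t+1}{j}$ combinatorially). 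Then the bipartite graph between these two sets is biregular, hence has a perfect matching, and $k_i = k_i^{(1)} + k_i^{(2)} + \tfrac12(\text{size of matched set on one side})$, i.e.\ $k_i = k_i^{(1)} + k_i^{(2)} + \binom{p-1}{t}\delta = \frac{p+t}{2p}\binom{p}{t}\delta + \frac{p+t-1}{2p}\binom{p}{t-1}\sum_j\binom{p-t+1}{j}$ after simplification, matching the claimed $k$.

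I expect the main obstacle to be step (iii): establishing that the bipartite ``pairing'' graph is regular on both sides and hence admits a perfect matching, and in particular verifying the combinatorial identity that the two sides have equal total size. This requires carefully setting up what an edge is — a singleton server $\overline{A}$ (with $x_i\notin A$, equivalently $x_i$ is one of its $t$ cells — wait, rather $x_i\notin$ the $t$ cells) is joined to a type-$j$ server $\Sigma B$ whose summation cell is $x_i + \sum_{y\in C} y$ when the $t$ cells of $\overline{A}$ together with the $t-1$ singletons of $\Sigma B$ cover $C\cup\{x_i\}$ with the right multiplicities so that $x_i$ is recoverable — and then checking that every vertex on the singleton side has the same degree and every vertex on the higher-type side has the same degree. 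Counting these degrees and confirming both sides have the same cardinality is the crux; the perfect-matching existence then follows from a standard argument (regular bipartite graphs, or König/Hall). The rest — the two symmetry counts and the final arithmetic simplification of $k$ — is routine. A secondary subtlety worth addressing is confirming that a type-$j$ server with $j\leq t$ would be needed for smaller $s$ but that for $s>2$ the types $t+1,\dots,p-t+1$ suffice precisely because $p-t+1 \geq t+1$, so the construction is consistent; and that the ``simplicity'' assumption on PIR array codes (an item derivable by one server appears only as that server's singleton cell) is respected.
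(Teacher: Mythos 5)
Your overall framework (use the symmetry of the construction, count the servers holding a singleton cell $x_i$, pair up the rest) matches the paper's, but the pairing you propose in steps (iii)--(iv) is the wrong one and the argument breaks there. You try to match the singleton servers missing $x_i$ against \emph{all} higher-type servers whose summation contains $x_i$, and you discard the higher-type servers not containing $x_i$ at all, asserting they ``cannot help span $x_i$.'' Both halves of this fail. First, the two sets you want to match are not equinumerous: for $p=9$, $t=3$ the singleton servers missing $x_i$ number $\binom{8}{3}\binom{5}{2}=560$, while the servers of types $4,\dots,7$ with $x_i$ in the summation number $\binom{8}{2}\sum_{j=4}^{7}\binom{6}{j-1}=1176$; in particular your claimed identity $\binom{p-1}{t}\binom{p-t-1}{t-1}=\binom{p-1}{t-1}\sum_{j}\binom{p-t+1}{j}$ is false, and no perfect matching between these two sets can exist. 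Second, a type-$j$ server containing $x_i$ nowhere is very much usable: paired with the type-$(j+1)$ server having the same $t-1$ singletons and the same summation augmented by $x_i$, the difference of the two summation cells yields $x_i$. Third, even granting your matching, your formula $k_i=k_i^{(1)}+k_i^{(2)}+\binom{p-1}{t}\delta$ simplifies to $\binom{p}{t}\delta+\frac{t-1}{p}\binom{p}{t-1}\sum_j\binom{p-t+1}{j}$, which is not the stated $k$. The stated $k$ is exactly $N_1+\frac{m-N_1}{2}$, where $N_1$ is the number of servers with a singleton cell $x_i$; reaching it requires that \emph{every} server lacking a singleton $x_i$ be placed into some spanning pair, with none discarded.

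The pairing that works (and that the paper uses) is an explicit bijection organized as a chain over types, so no Hall/regularity argument is needed. (a) The $\delta=\binom{p-t-1}{t-1}$ copies of the singleton server on $\{y_1,\dots,y_t\}$ (with $x_i\notin\{y_1,\dots,y_t\}$) are matched with the $\binom{p-t-1}{t-1}$ servers of type $t+1$ whose summation cell is exactly $x_i+\sum_{\ell=1}^{t}y_\ell$ (one for each choice of the $t-1$ singleton cells from the remaining $p-t-1$ items); this is precisely why $\delta$ was chosen as it was. (b) For $t+1\le j<p-t+1$, each type-$j$ server containing $x_i$ nowhere is matched with the type-$(j+1)$ server obtained by inserting $x_i$ into its summation cell; deleting $x_i$ from the summation gives the inverse map. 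Since a type-$(p-t+1)$ server contains every item either as a singleton or in its summation, the chain closes, and (a) and (b) together exhaust all servers without a singleton $x_i$. With that correction, your counts $k_i^{(1)}$ and $k_i^{(2)}$ and the final arithmetic go through as in the paper.
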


\begin{proof}
There are totally ${p\choose t}\times\delta$ singleton servers and totally ${p\choose {t-1}}{{p-t+1}\choose j}$ servers of type $j$, altogether
$$m={p\choose t}{{p-t-1}\choose{t-1}}+{p\choose {t-1}} \sum_{t+1\le j \le p-t+1} {{p-t+1}\choose j}.$$

Among the singleton servers, the number of singleton cells containing the singleton $x_i$ is exactly $\frac{t}{p}{p\choose t}{{p-t-1}\choose{t-1}}$. Among the non-singleton servers (each with $t-1$ singleton cells), the number of singleton cells containing the singleton $x_i$ is exactly $\frac{t-1}{p}{p\choose {t-1}} \sum_{t+1\le j \le p-t+1} {{p-t+1}\choose j}$.

For each singleton server without the singleton cell $x_i$, suppose it stores $\{y_1,y_2,\dots,y_t\}$. Then we may pair it with a server of type $t+1$ whose unique non-singleton cell contains $x_i+\sum_{j=1}^t y_j$. These two servers could together span $x_i$. The singleton server storing $\{y_1,y_2,\dots,y_t\}$ appears $\delta={{p-t-1}\choose{t-1}}$ times. Meanwhile we also have exactly ${{p-t-1}\choose{t-1}}$ servers of type $t+1$ whose unique non-singleton cell contains $x_i+\sum_{j=1}^t y_j$. So clearly we may divide these two families of servers in pairs.

Next we analyze a server of type $j$ satisfying: 1) it does not contain the singleton $x_i$ and 2) $x_i$ does not appear in the summation in its non-singleton cell, for some $t+1\le j < p-t+1$. We may suppose this server contains $z_1,z_2,\dots,z_{t-1}$ and $\omega_1+\omega_2+\dots+w_j$. Then we can pair it with a server of type $j+1$ containing $z_1,z_2,\dots,z_{t-1}$ and $x_i+\omega_1+\omega_2+\dots+w_j$. Clearly they two will together derive the value of $x_i$.

In this way, all the servers not containing a singleton $x_i$ are divided into pairs and each pair could together span $x_i$. So we have
\begin{align*}
k&=\frac{t}{p}{p\choose t}{{p-t-1}\choose{t-1}}+\frac{t-1}{p}{p\choose {t-1}} \sum_{t+1\le j \le p-t+1} {{p-t+1}\choose j}+\frac{m-\frac{t}{p}{p\choose t}{{p-t-1}\choose{t-1}}-\frac{t-1}{p}{p\choose {t-1}} \sum_{t+1\le j \le p-t+1} {{p-t+1}\choose j}}{2}\\
&=\frac{p+t}{2p}{p\choose t}{{p-t-1}\choose{t-1}}+\frac{p+t-1}{2p}{p\choose {t-1}} \sum_{t+1\le j \le p-t+1} {{p-t+1}\choose j}.
\end{align*}
\end{proof}

\begin{example}[$s=4$]
We have all combinations of ${{4t}\choose t}$ singleton servers, each appearing ${{3t-1}\choose{t-1}}$ times. Then we have all the servers of type $t+1,t+2,\dots,3t+1$, totally ${{4t}\choose{t-1}}\sum_{t+1\le j \le 3t+1} {{3t+1}\choose j}$ servers. It is routine to check that the number of servers is much smaller than that of Example \ref{example}.
\end{example}

\subsection{Comparing the rate of different PIR array codes} \label{rate}

To calculate the exact rate of the B-E Construction or our new construction is tedious and in some sense, not necessary. First note that in the B-E Construction and our modified construction, all those servers not containing $x_i$ could be divided into pairs so that each pair is capable of spanning $x_i$. Suppose we have $\alpha$ singleton servers and $\beta$ non-singleton servers. Among the $\alpha$ singleton servers there are totally $t\alpha$ singleton cells. Since the code is obviously symmetric for all the items, then $t\alpha/p$ servers contain a singleton cell $x_i$. Similarly among the $\beta$ non-singleton servers (each has $t-1$ singleton cells and a summation cell) there are $(t-1)\beta/p$ servers containing a singleton cell $x_i$. So we have $m=\alpha+\beta$ and $k=t\alpha/p+(t-1)\beta/p+\frac{m-t\alpha/p-(t-1)\beta/p}{2}$. Thus the rate $k/m=\frac{t+p}{2p}\frac{\alpha}{\alpha+\beta}+\frac{t+p-1}{2p}\frac{\beta}{\alpha+\beta}$ is a weighted average of $\frac{t+p}{2p}$ and $\frac{t+p-1}{2p}$ and is strictly larger than $\frac{t+p-1}{2p}=\frac{ts+t-1}{2ts}$.

At this point an easy observation is that a larger ratio $\frac{\alpha}{\alpha+\beta}$ implies a larger rate. It is tedious but straight forward to check that the B-E Construction does have a larger ratio $\frac{\alpha}{\alpha+\beta}$ compared to our modified construction. So the rate of the B-E Construction is strictly larger than ours. Next we shall show that our construction produces codes with better rate than the other existing PIR array codes in Theorem \ref{severalconstr}.

Since our construction produces codes with rate strictly larger than $\frac{t+p-1}{2p}=\frac{ts+t-1}{2ts}$, we shall first use this value to compare with the other existing PIR array codes and successfully show that $\frac{ts+t-1}{2ts}$ is larger than the rate of most existing codes. However there exist some sporadic cases when this comparison is not enough and then we have to proceed with a more detailed comparison.\\

$\bullet$ Comparison with Theorem \ref{severalconstr}, the first case (Construction 7 and Theorem 10 in \cite{blackburn}):\\

Let $3\le r \le t$ and $s=r-\frac{(r-2)+1}{t}$. In this case $p=ts=tr-t^2+2r-1$. It is easy to check that $\frac{ts+t-1}{2ts}>\frac{1}{2}+\frac{t-r+1}{2(rt-(r-2)r-1)}$ holds when $r\ge 3$.\\

$\bullet$ Comparison with Theorem \ref{severalconstr}, the second case (Construction 8 and Theorem 11 in \cite{blackburn}):\\

Let $s=r+\frac{d}{t}$ and $p=rt+d$, where $r\ge 2$ is an integer, $t\ge r$, $1\le d \le t-1$. The process to show that $\frac{ts+t-1}{2ts}=\frac{rt+d+t-1}{2(rt+d)}$ is larger than $1-\frac{(rt+d-t+r)(rt+d-t)}{(rt+d)(2rt+2d-2t+r)}$ can be reduced to proving $(r-2)(rt+d-t)>r$. This always holds when $r\ge 3$.

The remaining case is when $r=2$, $p=2t+d$ and we have to follow a detailed analysis. In our modified construction, the number of singleton servers is $A={{2t+d}\choose t}{{t+d-1}\choose{t-1}}$. The number of the other servers is $B={{2t+d}\choose t-1}\sum_{i=0}^d {{t+d+1}\choose i}$. So we have the rate
$$k/m=\frac{A\frac{3t+d}{4t+2d}+B\frac{3t+d-1}{4t+2d}}{A+B}=\frac{3t+d}{4t+2d}-\frac{1}{(4t+2d)(\frac{A}{B}+1)}.$$

A lower bound of $\frac{A}{B}$ can be derived as follows, where the third inequality is due to $\sum_{i=0}^d {{t+d+1}\choose i} \le \frac{d(t+d+1)!}{d!(t+1)!}$ and the fourth inequality is due to $d\le t-1$:

$$\frac{A}{B}=\frac{{{2t+d}\choose t}{{t+d-1}\choose{t-1}}}{{{2t+d}\choose t-1}\sum_{i=0}^d {{t+d+1}\choose i}}=\frac{(t+d+1)(t+d-1)!}{t(t-1)!d! \sum_{i=0}^d {{t+d+1}\choose i}}\ge \frac{t+1}{d(t+d)} >\frac{1}{t+d}.$$

Thus the rate $\frac{k}{m}\ge \frac{3t+d}{4t+2d}-\frac{1}{(4t+2d)(\frac{1}{t+d}+1)}$ and it is routine to check that the right-hand side is larger than $1-\frac{(t+d+2)(t+d)}{(2t+d)(2t+2d+2)}$.\\

$\bullet$ Comparison with Theorem \ref{severalconstr}, the third case (Construction 9 and Theorem 12 in \cite{blackburn}):\\

Let $s>2$ be an integer and $t\ge s$. The inequality $\frac{ts+t-1}{2ts}>\frac{ts+t+1}{s(2t+1)}$ can be equivalently reduced to $ts>3t+1$, which naturally holds when $s\ge 4$. The remaining case $s=3$ is analyzed as follows.

When $s=3$, then $p=3t$. In our construction $m={{3t}\choose t}{{2t-1}\choose{t-1}}+{{3t}\choose{t-1}}2^{2t}$ and $k=\frac{2}{3}{{3t}\choose t}{{2t-1}\choose{t-1}}+\frac{4t-1}{6t}{{3t}\choose{t-1}}2^{2t}$. We now prove that our rate $k/m$ is larger than $\frac{4t+1}{6t+3}$ using the following deductions:
\begin{align*}
 \frac{\frac{2}{3}{{3t}\choose t}{{2t-1}\choose{t-1}}+\frac{4t-1}{6t}{{3t}\choose{t-1}}2^{2t}}{{{3t}\choose t}{{2t-1}\choose{t-1}}+{{3t}\choose{t-1}}2^{2t}} &> \frac{4t+1}{6t+3}\\
\Longleftrightarrow \frac{2t+1}{t}{{2t-1}\choose{t-1}} \cdot \frac{1}{6t+3} &> 2^{2t}(\frac{4t+1}{6t+3}-\frac{4t_1}{6t})\\
\Longleftrightarrow (4t+2){{2t-1}\choose{t-1}} &> 2^{2t}
\end{align*}
where the last inequality can be easily checked by induction: when $t=1$, the inequality corresponds to $6>4$; the inductive step follows from $\frac{(4t+6){{2t+1}\choose{t}}}{(4t+2){{2t-1}\choose{t-1}}}=\frac{4t+6}{t+1}>4$.\\

$\bullet$ Comparison with Theorem \ref{severalconstr}, the fourth case (Construction 10 and Theorem 13 in \cite{blackburn}):\\

Let $s>2$ be an integer. Let $(s-1)t=lb$ and $t\ge l+b$, where $l$ and $b$ are positive integers. Obviously $l$ should be larger than 1. Thus $\frac{ts+t-1}{2ts}=\frac{s+1}{2s}-\frac{1}{2st}>\frac{s+1}{2s}-\frac{l}{2st}$ holds.

Summing up the above, we have shown that the PIR array code produced by our construction has better rates than all the existing ones in Theorem \ref{severalconstr}.

\subsection{Does the B-E Construction have optimal rate?}

Finally we add some discussions on the B-E Construction regarding its potential optimality. The following analysis is based on intuitive ideas rather than strict proofs. To prove or disprove the optimality of the B-E Construction will be of great interest.

For any PIR array code with optimal rate, first note that we may assume that all servers of the same type appear the same number of times, just as in the B-E Construction. This is because if it is not the case, then we may choose any permutation $\pi\in S_p$ and let it operate on the code, exchanging the names of the items. Taking the union of all such $p!$
codes will result in an optimal code, in which all servers of the same type appear the same number of times.

So we may assume that all those ${p\choose t}$ singleton servers appear a certain number of times. For each of those singleton servers not containing a given item $x_i$, we shall find its partner to cooperate on spanning $x_i$. We turn to servers of type $j$ for help and the candidate for the value $j$ should be $2\le j \le t+1$. Then what should be the proper choice for $j$? The set of servers of type $j$ can be divided into three subsets: $A_j$ servers containing a singleton $x_i$, $B_j$ servers containing $x_i$ in its summation part and the rest $C_j$ servers in which $x_i$ neither appears as a singleton nor appears in its summation part. It can be easily calculated that $\frac{B_j}{C_j}=\frac{1}{\frac{p-t+1}{j}-1}$, which increases as $j$ increases. The $B_j$ servers are the partners we wish to find for those singleton servers not containing $x_i$ and the $C_j$ servers accompanied will become new troubles. So intuitively we wish to maximize the ratio $\frac{B_j}{C_j}$ and thus $j=t+1$. Following the same analysis step-by-step, we choose the servers of type $t+1,2t+1,3t+1\dots$, which is exactly the B-E Construction.

Moreover, it seems that bringing in servers with less than $t-1$ singleton cells does no good. Suppose we have $\alpha_i$ servers with $t-i$ singleton cells, $0\le i\le t$. Then following a similar analysis as in Subsection \ref{rate}, the rate will be a weighted average of $\{\frac{t+p-i}{2p}:0\le i \le t\}$, where $\alpha_i$'s are the corresponding weights. So the existence of servers with less than $t-1$ singleton cells is very likely to decrease the rate.

To sum up, we believe that all these intuitive analyses above are positive evidences to the following conjecture:
\begin{conjecture}
For $s>2$, the PIR array codes produced by the B-E Construction have optimal rate.
\end{conjecture}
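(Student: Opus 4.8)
The plan is to prove the conjecture by pairing the known lower bound with a matching upper bound. Write $R_{\mathrm{BE}}(s,t)$ for the rate of the B-E code; since that code is already exhibited, it suffices to show $g(s,t)\le R_{\mathrm{BE}}(s,t)$. The first step is purely computational: solve the recurrence $\eta_1:\eta_2=\binom{p-t-1}{t-1}:1$ and $\eta_r:\eta_{r+1}=\binom{p-rt-1}{t-1}:\binom{rt}{t-1}$ for $2\le r\le s-1$, combine it with $|T_1|=\binom{p}{t}$ and $|T_r|=\binom{p}{t-1}\binom{p-t+1}{(r-1)t+1}$, and telescope to get the total number of singleton servers $\alpha=\eta_1|T_1|$ and of non-singleton servers $\beta=\sum_{r\ge2}\eta_r|T_r|$; then $R_{\mathrm{BE}}=\tfrac{p+t-1}{2p}+\tfrac{1}{2p}\cdot\tfrac{\alpha}{\alpha+\beta}$ by the averaging identity of Subsection~\ref{rate}. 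One should note at the outset that $R_{\mathrm{BE}}(s,t)$ is \emph{strictly smaller} than the bound of Theorem~\ref{mainupperbound} when $s>2$ (already for $s=3$, $t=2$ one gets $R_{\mathrm{BE}}=79/129$ against $13/21$), so the conjecture cannot be closed by the linear program of Theorem~\ref{mainupperbound}; a genuinely sharper bound is needed.

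The sharper bound should come from enriching that counting argument with the \emph{matching-feasibility constraints} it currently ignores. Concretely: (i) reduce to symmetric codes, in which all isomorphic copies of a structural server type occur with a common multiplicity --- this is the $S_p$-averaging argument sketched above, made rigorous by observing that the union of the $p!$ permuted codes has the same rate; (ii) reduce to codes built only from singleton servers and servers with exactly $t-1$ singleton cells plus one summation cell, i.e. show that ``deficient'' servers (at most $t-2$ singletons) never increase the rate; (iii) for such a symmetric code, parametrize by the multiplicities $n_j$ of the type-$j$ servers, $1\le j\le p-t+1$, and for each fixed item $x_i$ write down the pairing of the servers that do not contain the singleton $x_i$. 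Here the key structural fact is that a server of type $j$ ``knows'' $t-1+j$ items, and a server knowing a set $W$ with $x_i\notin W$ can be matched only to a server whose summation cell equals $x_i+\sum_{w\in W'}w$ for some $W'\subseteq W\cup(\text{its own }t-1\text{ singletons})$; iterating this along the knowledge hierarchy yields a cascade of biregular bipartite graphs whose Hall conditions are exactly the ratio identities satisfied by the B-E multiplicities. Feeding these ratio constraints into the program of Theorem~\ref{mainupperbound} gives an enhanced program, and the remaining task is to solve it --- I expect by exhibiting a dual solution --- and check that its optimum equals $R_{\mathrm{BE}}(s,t)$.

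Carrying this out in order: first the closed form for $R_{\mathrm{BE}}$; then the symmetrization lemma; then the ``deficient servers do not help'' lemma; then, for symmetric codes built from singleton and $(t-1)$-singleton servers, the structural lemma that an optimal pairing may be taken to be the cascade and that summation sizes not of the form $(r-1)t+1$ can be eliminated by a redistribution of multiplicities that does not decrease the singleton fraction $\alpha/(\alpha+\beta)$; and finally the duality computation showing that $\max\alpha/(\alpha+\beta)$ over feasible multiplicity vectors is attained at the B-E ratios. Combining these gives $g(s,t)\le R_{\mathrm{BE}}(s,t)$, hence equality, and the case of non-integer $s$ follows from the analogous cascade. (An alternative, and possibly easier, target is the asymptotic statement $g(s)=\overline{\lim}_t R_{\mathrm{BE}}(s,t)$, for which the same reductions apply with the finite linear programs replaced by their limits.)

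The main obstacle is step (ii) together with the structural half of the last step --- that is, converting the two ``intuitive'' arguments of the present subsection into proofs. A deficient server carries fewer singleton cells but can participate in spanning sets (for instance size-three sets using two singleton servers, or pairs of two summation-bearing servers) that an ordinary server cannot form, so ruling it out requires a genuine exchange argument rather than the weighted-average heuristic quoted above; likewise, showing that mixing summation sizes other than $(r-1)t+1$ cannot beat the pure cascade requires controlling the simultaneous effect of a size change on both the set of servers that \emph{need} partners and the set that \emph{can serve} as partners. A secondary difficulty is that the enhanced program, as written, is a priori only a relaxation of the restricted problem, so one must also verify that \emph{every} code --- not only cascade-structured ones --- obeys its constraints; this is plausible from the cell-capacity bounds used in Theorem~\ref{mainupperbound} but will need care precisely for the exotic pairings just mentioned.
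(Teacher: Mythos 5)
The statement you are addressing is a \emph{conjecture}: the paper offers no proof of it, explicitly describes its supporting discussion as ``based on intuitive ideas rather than strict proofs,'' and lists proving or disproving it as an open problem. Your submission is, by your own account, a research programme rather than a proof, and the steps you defer --- your step (ii), the structural half of the last step, and the verification that the ``enhanced program'' is a valid relaxation --- are not loose ends but the entire mathematical content of the problem. What you do carry out is sound and even useful: the closed form $R_{\mathrm{BE}}=\tfrac{p+t-1}{2p}+\tfrac{1}{2p}\cdot\tfrac{\alpha}{\alpha+\beta}$ is correct, the $S_p$-symmetrization is the same reduction the paper sketches, and your numerical check that Theorem \ref{mainupperbound} gives $13/21$ at $s=3$, $t=2$ while the B-E code achieves only $79/129$ is right and correctly shows that the existing upper bound cannot close the conjecture. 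But none of this advances the upper-bound side beyond where the paper already stands.

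The concrete reason the plan as written would fail is that the ``ratio identities'' you propose to feed into the linear program are properties of one particular pairing used by one particular construction, not constraints obeyed by every $k$-PIR array code. An adversarial code may span $x_i$ with sets of three or more servers, may pair two summation-bearing servers with each other (the counting in Theorem \ref{mainupperbound} already has to allow for this), may use summation sizes unrelated to $(r-1)t+1$, and need not respect any cascade structure; so the Hall conditions of the B-E bipartite graphs are simply not valid inequalities for the general problem, and the ``enhanced program'' is not a relaxation of it. The same defect undermines step (ii): the weighted-average formula $\sum_i\alpha_i\tfrac{t+p-i}{2p}$ presupposes that every spanning set has size at most two and contains a server holding the singleton $x_i$ or exactly one non-singleton server, which is exactly what a deficient server can violate. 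Turning these heuristics into valid upper-bound arguments is the open problem itself, and your proposal does not supply the missing exchange or duality arguments --- indeed it is not even established here that the conjecture is true, so a referee cannot accept a plan whose endpoint is assumed rather than reached.
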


\section{Conclusion} \label{conclusion}

In this paper we consider the problem of constructing optimal PIR array codes, following the work of \cite{fazeli} and \cite{blackburn}. For the case $1<s\le2$, we determine the minimum number of servers admitting a PIR array code with optimal rate for a certain range of parameters, i.e. $t> d^2-d$. We believe a similar result may be found for the remaining cases by a different approach. For the case $s>2$, we derive a new upper bound on the rate and we analyze the construction by Blackburn and Etzion in several aspects. Especially, to prove or to disprove the optimality of the B-E construction for $s>2$ will be of great interest.

\bibliographystyle{siam}
\bibliography{ref}

\end{document}